\newcommand{\dd}{\mathop{}\!\mathrm{d}}
\newcommand{\s}[1]{\mathrm{#1}}
  \newenvironment{proof}[1][Proof]{%
    \par\noindent\textit{#1.}\ }{%
    \hfill\(\square\)\par}%
\begin{document}
\begin{frontmatter}

\title{On the Stealth of Unbounded Attacks Under Non-Negative-Kernel Feedback
%
\thanksref{footnoteinfo}} 

\thanks[footnoteinfo]{This study is sponsored, in part, by the Swedish Research Council and the Swedish Energy Agency.}

\author[First]{Kamil~Hassan},
\author[First]{Henrik~Sandberg}

\address[First]{School of Electrical Engineering and Computer Science, KTH Royal Institute of Technology, Sweden (email: kamilha@kth.se, hsan@kth.se)}
\begin{abstract}                
The stealth of false data injection attacks (FDIAs) against feedback sensors in linear time-varying (LTV) control systems is investigated. In that regard, the following notions of stealth are pursued: For some finite $\epsilon > 0$, i)~an FDIA is deemed $\epsilon$-\emph{stealthy} if the deviation it produces in the signal that is monitored by the anomaly detector 
remains $\epsilon$-bounded for all time, 
and ii)~the $\epsilon$-stealthy FDIA is further classified as \emph{untraceable} if the bounded deviation dissipates over time (asymptotically). For LTV systems that contain a chain of $q \geq 1$ integrators and feedback controllers with non-negative impulse-response kernels, it is proved that polynomial (in time) FDIA signals 
of degree $a$ --- growing \emph{unbounded} over time --- will remain 
i)~$\epsilon \text{-stealthy}$, for some finite $\epsilon > 0$, if $a \leq q$, and ii)~untraceable, if $a < q$. These results are obtained using 
the theory of linear Volterra integral equations. 


\end{abstract}

\begin{keyword}
Cyber security networked control, Linear parameter-varying systems 
\end{keyword}

\end{frontmatter}

\section{Introduction} \label{sec: intro}
The growing reliance of modern control systems on IT infrastructure makes them vulnerable to malicious agents and cyber threats, as argued in \cite{hemsley2018history}. To mitigate these threats, it is essential to quantify the underlying vulnerabilities so that appropriate defense strategies can be developed. 
In light of this, the current work is devoted to the investigation of the underlying conditions that allow (possibly) unbounded false data injection attacks (FDIAs) on feedback sensors to remain \emph{stealthy}. In particular, we consider feedback systems that contain a chain of integrators in their closed-loop, thereby focusing the scope of the study. 




Integrators arise naturally in the modeling of many physical processes.  
For example, consider the mechanical motion of a rigid body: The mapping of the force applied to its position is modeled as a chain of two integrators (a double integrator system), as studied in \cite{rao2001naive}. 
Integrators also appear in many control laws, where integral action may be desired to achieve zero steady-state error and disturbance rejection \cite[Chapter~7.10]{franklin2002feedback}. Therefore, the study of such integrator-endowed systems in the context of security has practical significance and implications for real-world systems. With that in mind, the objective of this work is to investigate whether the presence of these integrators in feedback systems makes them susceptible to stealthy FDIAs.



The stealth of FDIAs has been investigated in various forms in the current literature. 
For example, \cite{pasqualetti2013attack} defines an attack undetected (and, thus, stealthy) if the output measurement under the attack remains indistinguishable from the nominal (no-attack) case. In \cite{teixeira2015secure}, on the other hand, the stealth of an attack is characterized by the dynamic residual that is constructed using both the actuation and the output signals. 
While seemingly distinct, in each of these cases, the stealth of the FDIA is based on a measure of the deviation it produces in a signal of interest to the anomaly detector (AD) employed.
In view of this, in this work, we define the stealth of an FDIA in terms of the maximum (finite) amplitude, $\epsilon > 0$, by which the actuation signal deviates (from its nominal trajectory) in response to this FDIA. This characteristic is then termed the $\epsilon \text{-stealth}$ of the FDIA. In addition, if this deviation vanishes over time (asymptotically), we deem the FDIA \emph{untraceably stealthy (u.s.)}. 
Under this framework, we study the stealth of FDIAs against integrator-endowed feedback systems by examining the deviation of the actuation signal with respect to its nominal reference trajectory.

To account for non-constant nominal trajectories in our study, 
we adopt linear time-varying (LTV) representations for both the plant and the controller around these trajectories. This permits us to extend our results to study the stealth of attacks against satellites, for example, where the nominal reference trajectory is represented by their elliptical orbit (and not by a constant equilibrium of the system). Then, corresponding to the setup described above, we model the deviation of the actuation signal with respect to its nominal trajectory as a linear Volterra integral equation (LVIE), thereby transforming the stealth analysis into an investigation of the conditions that render the LVIE stable. 
In doing so, we make the following contributions: 
\begin{enumerate}
\item \textbf{(characterization of stealthy FDIAs):} For LTV systems that contain a chain of integrators and feedback controllers with \textbf{\emph{non-negative}} impulse-response kernels, we identify (by their polynomial degree) the types of unbounded FDIA signals that will remain stealthy to an AD comparing the actuation signal with its nominal trajectory.
\item \textbf{($\epsilon \text{-stealth}$):} For non-negative-kernel feedback systems with $q \geq 1$ integrators in their closed-loop, it is proved that polynomial (in time) FDIA signals of degree $a$ will remain $\epsilon \text{-stealthy}$, for some finite $\epsilon > 0$, if $a \leq q$.
\item \textbf{(untraceable stealth):} For these non-negative-kernel feedback systems, we further prove that if $a < q$, then the polynomial FDIA signals will attain untraceable stealth.
\item \textbf{(unproven extensions):} Through numerical simulations, we show that the previously stated theoretical results pertaining to the stealth of the polynomial-type FDIAs may also extend to the case of \textbf{\emph{non-positive}}-kernel feedback systems with $q \geq 1$ integrators in the closed-loop. 
\end{enumerate}

The organization of the remaining article is as follows: In Section~\ref{sec: prelim LVIE}, a brief background is provided on linear Volterra integral equations. In Section~\ref{sec: sys des}, first, the setup is described, and then the problem is formulated. In Section~\ref{sec: main results}, the main results related to the stealth of FDIAs are presented, which is then followed by the numerical simulations in Section~\ref{sec: num ex}. Finally, concluding remarks are included in Section~\ref{sec: concl}. 


\emph{Notations:} The set of real numbers is denoted by $\mathbb{R}$, and the set of integers by $\mathbb{Z}$. Furthermore, for $r \in \mathbb{R}$ and $s \in \mathbb{Z}$, $\mathbb{R}_{\geq r} \coloneqq \{x \in \mathbb{R} : x \geq r\}$ and $\mathbb{Z}_{\geq s}\coloneqq \{x \in \mathbb{Z}: x \geq s\}$. Finally, for some interval~$\mathcal{I} \subseteq \mathbb{R}_{\geq 0}$ and set $\mathcal{J} \subseteq \mathbb{R}^{m}$, $\mathsf{C}(\mathcal{I}, \mathcal{J})$ denotes the set of continuous functions mapping $\mathcal{I}$ to $\mathcal{J}$.

\section{Preliminaries on Linear Volterra Integral Equations} \label{sec: prelim LVIE}
The following (scalar) integral equation is characterized in the literature as 
a non-homogeneous linear Volterra integral equation (LVIE) of the second kind 
\begin{equation}    \label{eq: generic LVIE}
    x(t) = \int_0^t \, G(t,\tau)\, x(\tau) \, \dd \tau + \phi(t), \quad t\geq 0,
\end{equation}
where $x(t)\in \mathbb{R}$, the input $\phi(t) \in \mathbb{R}$, and the causal kernel $G:\mathbb{R}_{\geq 0}\times \mathbb{R}_{\geq 0} \to \mathbb{R}$. 
See, for example, \cite{gripenberg1990volterra} and \cite{burton2005volterra} for some references on LVIEs. 
We assume that for any $T > 0$, $G$ in \eqref{eq: generic LVIE} is continuous on the domain~$\{(t,\tau) \in \mathbb{R}_{\geq 0} \times \mathbb{R}_{\geq 0} : 0 \leq t \leq T,0 \leq \tau \leq t \}$. This ensures that, corresponding to any continuous input~$\phi: \mathbb{R}_{\geq 0} \to \mathbb{R}$, there exists a unique solution~$x: \mathbb{R}_{\geq 0} \to \mathbb{R}$ to \eqref{eq: generic LVIE} \cite[Theorem~1.2.3]{brunner2017volterra}. Next, we define the different notions of stability as they pertain to the zero solution of \eqref{eq: generic LVIE} (that is, $x(t) \equiv 0$ corresponding to $\phi(t) \equiv 0$).
\begin{defn} \label{def: Stability ZB}
    The zero solution of \eqref{eq: generic LVIE}  is \emph{\textbf{stable}} if for all~$\epsilon > 0$, there exists~$\delta(\epsilon) >0$ such that~$$\sup_{t\geq 0} |\phi(t)|< \delta(\epsilon) \implies \sup_{t \geq 0}|x(t)| < \epsilon.$$
\end{defn}

\begin{defn} \label{def: AS ZB}
       The zero solution of \eqref{eq: generic LVIE} is \emph{\textbf{asymptotically stable}} if it is stable and there exists~$\delta_1 > 0$ such that 
    $$\sup_{t \geq 0} |\phi(t)|  < \delta_1 \text{ and } \lim_{t \to \infty}\phi(t)=0  \implies \lim_{t \to \infty}x(t) = 0.$$
\end{defn}

Now we present some results from the existing literature to establish the sufficient and necessary conditions needed for \eqref{eq: generic LVIE} to be stable as per the notions described in Definitions~\ref{def: Stability ZB} and \ref{def: AS ZB}. 

\begin{lem}[\cite{tsalyuk1979volterra},~Section~1.3.1] \label{lem: iff stable ZB}
Let $G(t,\tau) \geq 0$, for all $0 \leq \tau \leq t < \infty$. The zero solution of \eqref{eq: generic LVIE} is stable per Definition~\ref{def: Stability ZB} if, and only if,
    \begin{align} \label{eq: iff bounded kernel cond}
        \sup_{t \geq 0} \int _0^t G(t,\tau)\, \dd \tau < \infty, 
    \end{align}
    and, for some~$v \in \mathbb{Z}_{\geq 1}$, 
    \begin{equation} \label{eq: iff iterate cond}
    \begin{split}        
        A_v  
        = \lim_{T \to \infty} \sup_{t \geq T} \int _{T}^{t} \, G_v(t,\tau)\, \dd \tau
    \end{split}
    \end{equation}
    lies within the unit circle, where~$G_v(t,\tau)$ is the~$v^{\text{th}}$ iterate of the kernel~$G(t,\tau)$ such that $G_1(t,\tau) \equiv G(t,\tau)$ and
    \begin{equation}
        G_{m}(t,\tau) \coloneqq \int_\tau^t G(t,\sigma) G_{m-1}(\sigma , \tau) d\sigma, \quad m \in \mathbb{Z}_{\geq 2}. 
    \end{equation}
\end{lem}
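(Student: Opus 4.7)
The plan is to build the proof around the resolvent kernel $R(t,\tau)$ associated with $G$, defined via the Neumann series $R(t,\tau) = \sum_{m=1}^{\infty} G_m(t,\tau)$. Once its existence is established, the unique solution of \eqref{eq: generic LVIE} admits the representation
$$x(t) = \phi(t) + \int_0^t R(t,\tau)\, \phi(\tau)\, \dd \tau,$$
and because $G \geq 0$, induction on $m$ gives $G_m \geq 0$, hence $R \geq 0$ on its domain. Under these positivity properties, stability of the zero solution in the sense of Definition~\ref{def: Stability ZB} is equivalent to $\sup_{t \geq 0} \int_0^t R(t,\tau)\, \dd \tau < \infty$, so the proof reduces to expressing this uniform bound in terms of the single kernel $G$.

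For the necessity of (i) and (ii), I would first probe the system with the bounded input $\phi \equiv 1$. The corresponding solution is $x(t) = 1 + \int_0^t R(t,\tau)\, \dd \tau$, so stability forces $\sup_{t \geq 0}\int_0^t R(t,\tau)\, \dd \tau < \infty$, which in turn implies \eqref{eq: iff bounded kernel cond} through the pointwise inequality $R \geq G_1 = G$. For \eqref{eq: iff iterate cond}, I would argue by contradiction: assuming \eqref{eq: iff bounded kernel cond} holds but $A_v \geq 1$ for every $v \in \mathbb{Z}_{\geq 1}$, one constructs a sequence of bounded inputs whose supports are placed progressively further out in time so that the residual tail mass of the iterates $G_v$ accumulates in the Volterra convolution and drives $x(t)$ past any prescribed bound, violating Definition~\ref{def: Stability ZB}.

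For sufficiency, pick $v$ with $A_v < 1$, choose $\rho \in (A_v,1)$, and fix $T$ large enough that $\sup_{t \geq T} \int_T^t G_v(t,\tau)\, \dd \tau \leq \rho$. A direct induction using the recursion $G_{jv}(t,\tau) = \int_\tau^t G_{(j-1)v}(t,\sigma)\, G_v(\sigma,\tau)\, \dd\sigma$ yields $\sup_{t \geq T} \int_T^t G_{jv}(t,\tau)\, \dd \tau \leq \rho^j$, and the intermediate iterates $G_{jv+k}$ with $0 \leq k < v$ inherit a geometric bound by combining this with the uniform bound from \eqref{eq: iff bounded kernel cond}. Summing the Neumann series gives a uniform bound on $\int_T^t R(t,\tau)\, \dd \tau$ for $t \geq T$, while on the compact region $\{0 \leq \tau \leq t \leq T\}$ the continuity of $G$ together with \eqref{eq: iff bounded kernel cond} bound $\int_0^T R(t,\tau)\, \dd \tau$ uniformly in $t$, completing the argument. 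The main obstacle, in my view, is the necessity of \eqref{eq: iff iterate cond}: crafting bounded inputs whose outputs must blow up when every $A_v \geq 1$ requires careful placement of the inputs' supports to exploit the non-vanishing tail mass of the iterates at infinity, and this is precisely the step where the interplay between \eqref{eq: iff bounded kernel cond} and \eqref{eq: iff iterate cond} — and the reason neither alone suffices — becomes most delicate.
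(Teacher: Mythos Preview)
The paper does not supply its own proof of this lemma: it is quoted as a preliminary result from the cited survey \cite{tsalyuk1979volterra} (Section~1.3.1) and used as a black box thereafter. There is therefore nothing in the paper to compare your proposal against.

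For what it is worth, your resolvent-based outline is the standard route to results of this type and is broadly in line with what one finds in the Volterra literature. One point worth tightening in the sufficiency half: you bound $\int_T^t R(t,\tau)\,\dd\tau$ geometrically and then appeal to ``the compact region $\{0\le\tau\le t\le T\}$'' to control $\int_0^T R(t,\tau)\,\dd\tau$. But for $t>T$ this latter integral is \emph{not} confined to that compact set in the $t$-variable, so continuity alone does not yield a bound uniform in $t$. You need an extra step---for instance, feed the already-established tail bound on $\int_T^t R(t,\tau)\,\dd\tau$ back through the resolvent identity $R(t,\tau)=G(t,\tau)+\int_\tau^t G(t,\sigma)R(\sigma,\tau)\,\dd\sigma$ together with \eqref{eq: iff bounded kernel cond}, or argue via a Gr\"onwall-type estimate on the initial segment. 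The necessity of \eqref{eq: iff iterate cond} is, as you note, the genuinely delicate part, and your contradiction sketch is only a placeholder; the actual construction in the source is where most of the work lies.
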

\begin{lem}[\cite{tsalyuk1979volterra},~Section~1.3.3] \label{lem: iff AS ZB}
    Let $G(t,\tau) \geq 0$, for all $0 \leq \tau \leq t< \infty$. The zero solution of \eqref{eq: generic LVIE} is asymptotically stable per Definition~\ref{def: AS ZB} if, and only if, it is stable per Definition~\ref{def: Stability ZB}, and, for any~$T > 0$,
        \begin{equation} \label{eq: iff AS ZB}
            \lim_{t \to \infty} \, \int_{0}^T\, G(t,\tau) \, \dd \tau = 0. 
        \end{equation}
\end{lem}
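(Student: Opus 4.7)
The plan is to establish both directions of the biconditional separately, exploiting the sign-definiteness of $G$ to reduce each implication to elementary scalar estimates combined with the stability characterization of Lemma~\ref{lem: iff stable ZB}.

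For necessity, I would pick a simple non-negative test input supported essentially on $[0,T]$: for instance, a continuous $\phi$ with $\phi(\tau) \equiv \delta$ on $[0,T]$ and $\phi(\tau) \equiv 0$ on $[T+1,\infty)$, where $\delta \in (0,\delta_1)$ is chosen so that Definition~\ref{def: AS ZB} applies. Because $G \geq 0$ and $\phi \geq 0$, a first-zero-crossing argument applied to the LVIE forces $x(\tau) \geq 0$ for all $\tau$, with $x(\tau) \geq \delta$ on $[0,T]$. For $t > T+1$, dropping the non-negative tail $\int_T^t G(t,\tau) x(\tau) \dd \tau$ in the resulting equation yields the lower bound
\begin{equation*}
x(t) \; \geq \; \delta \int_0^T G(t,\tau) \dd \tau.
\end{equation*}
Since asymptotic stability forces $x(t) \to 0$ and $T > 0$ was arbitrary, the limit \eqref{eq: iff AS ZB} follows.

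For sufficiency, assume stability per Definition~\ref{def: Stability ZB} together with \eqref{eq: iff AS ZB}, and fix an admissible input $\phi$ with $\sup|\phi| < \delta_1$ and $\phi(t) \to 0$. By stability, $\sup_{t} |x(t)| \leq M$ for some finite $M$, and Lemma~\ref{lem: iff stable ZB} furnishes some $v \in \mathbb{Z}_{\geq 1}$ with $A_v < 1$. Iterating the LVIE $v$ times yields
\begin{equation*}
x(t) \; = \; \int_0^t G_v(t,\tau) x(\tau) \dd \tau \; + \; \sum_{k=0}^{v-1} \int_0^t G_k(t,\tau) \phi(\tau) \dd \tau,
\end{equation*}
with the convention that the $k=0$ term reduces to $\phi(t)$. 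Splitting the leading integral at an auxiliary time $T$ and taking absolute values produces
\begin{equation*}
|x(t)| \; \leq \; M \int_0^T G_v(t,\tau) \dd \tau \; + \; \Bigl(\sup_{\tau \geq T} |x(\tau)|\Bigr) \sup_{s \geq T} \int_T^s G_v(s,\tau) \dd \tau \; + \; \Sigma_\phi(t),
\end{equation*}
where $\Sigma_\phi(t)$ collects the $\phi$-dependent terms. Letting $t \to \infty$ and then $T \to \infty$, and writing $L := \limsup_{t\to\infty} |x(t)|$, the estimate collapses to $L \leq A_v L$, hence $L = 0$ because $A_v < 1$.

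The technical heart of the argument, and the step I expect to demand the most care, is justifying that the first summand and $\Sigma_\phi(t)$ both vanish in the relevant limits. Concretely, I would need to propagate the fading-memory property \eqref{eq: iff AS ZB} from $G$ through every iterate, showing $\int_0^T G_m(t,\tau) \dd \tau \to 0$ as $t \to \infty$ for all $m \leq v$, and to show that $\int_0^t G_k(t,\tau) \phi(\tau) \dd \tau \to 0$ whenever $\phi(t) \to 0$. Both follow from a split-and-estimate scheme in which one piece of the inner integral is controlled by the uniform bound \eqref{eq: iff bounded kernel cond} and the other by either \eqref{eq: iff AS ZB} itself or by the tail smallness of $\phi$; nevertheless, the nested bookkeeping for $G_m$ is delicate and seems to call for an induction on $m$. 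Once this propagation is in hand, the scalar contraction $L \leq A_v L$ closes the argument.
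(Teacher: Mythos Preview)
The paper does not actually prove this lemma: it is quoted verbatim from \cite{tsalyuk1979volterra} as a preliminary result, with no accompanying argument. So there is no ``paper's own proof'' to compare against; your proposal is an independent reconstruction of a result the authors simply import.

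Assessed on its own merits, your outline is sound and follows the standard route for this type of statement. For necessity, the positivity of the resolvent (which you invoke implicitly via the first-zero-crossing argument) gives $x \geq \phi \geq 0$, and your lower bound $x(t) \geq \delta \int_0^T G(t,\tau)\,\dd\tau$ is correct for $t > T+1$. For sufficiency, the $v$-fold iteration plus the split at $T$ leading to the contraction $L \leq A_v L$ is exactly the right mechanism. One small point worth tightening: when you write ``by stability, $\sup_t |x(t)| \leq M$,'' you are implicitly using that for a \emph{linear} Volterra equation, stability in the sense of Definition~\ref{def: Stability ZB} upgrades to a uniform bound $\sup|x| \leq K \sup|\phi|$; this is true by scaling, but deserves a sentence. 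The inductive propagation of \eqref{eq: iff AS ZB} to the iterates $G_m$ that you flag as delicate is indeed the crux, and your sketched double-split (bounded inner integral on $[T,T']$ handled by \eqref{eq: iff AS ZB}, small inner integral on $[T',t]$ handled by \eqref{eq: iff bounded kernel cond}) is the right way to close it.
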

\begin{rem}    
Lemmas~\ref{lem: iff stable ZB} and~\ref{lem: iff AS ZB} apply to LVIEs of the form \eqref{eq: generic LVIE} if they have a non-negative kernel $G(t,\tau) \geq 0$ for all $0 \leq \tau < t < \infty$. 
Admittedly, this restricts the applicability of these results (and our ensuing analysis that uses them). We can, however, extend these results to the instantiations of \eqref{eq: generic LVIE} with non-positive kernels: As pointed out at the beginning of \cite[Section~1.3]{tsalyuk1979volterra}, 
if the conditions stated in Lemmas~\ref{lem: iff stable ZB} and~\ref{lem: iff AS ZB} are satisfied for the kernel $|G(t,\tau)| \in \mathbb{R}_{\geq 0}$ instead, then it is sufficient to imply that the zero solution of \eqref{eq: generic LVIE} (with the possibly non-positive kernel $G(t,\tau) \in \mathbb{R}$) is stable per Definitions~\ref{def: Stability ZB} and~\ref{def: AS ZB}, respectively (see Example~2 in Section~\ref{sec: num ex}).  
\end{rem}

\section{System Description and Problem Formulation} \label{sec: sys des}
In this section, we endeavor to provide context for the current study. To that end, first, the setup is described, and then the description is employed towards formulating the problem. 

\begin{figure}[t!]
    \centering
    \includegraphics[scale = 0.13]{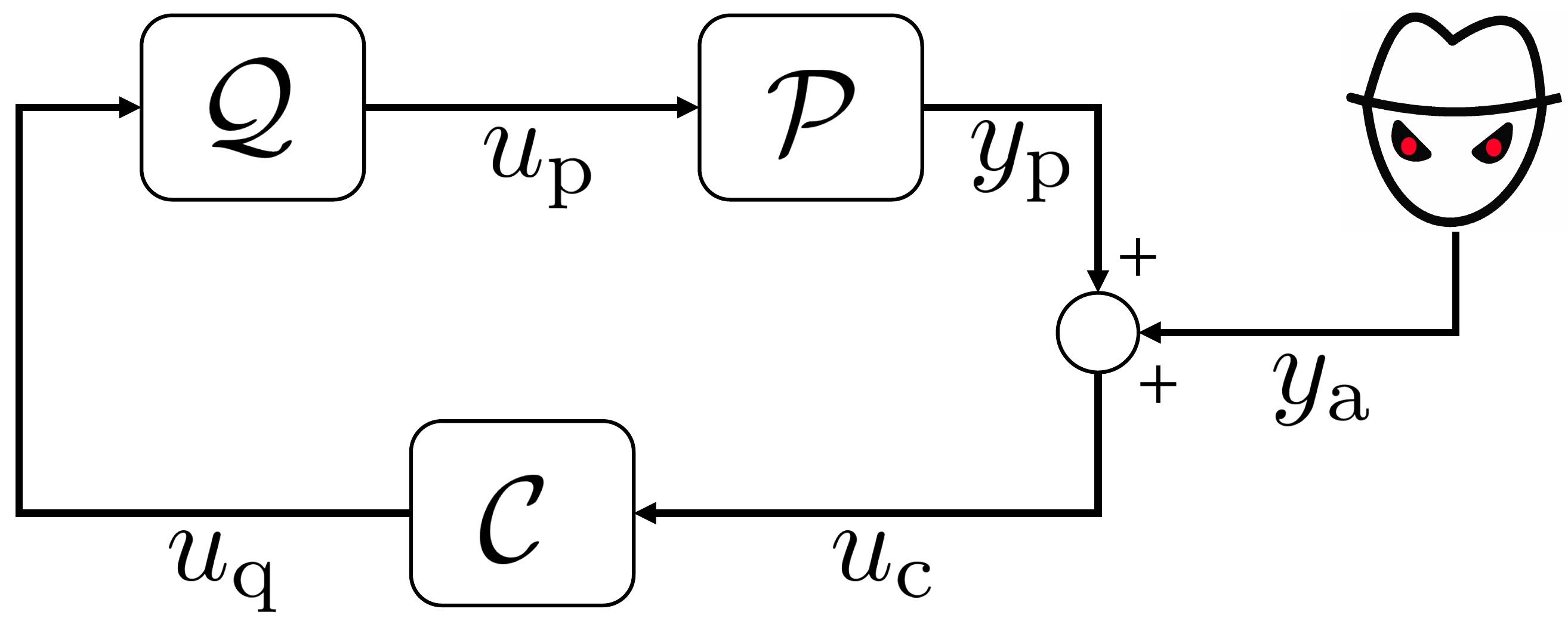} 
    \caption{Control system under false data injection attack~$y_{\s{a}}$.  
    \label{fig: system}}
\end{figure}

\subsection{Setup} \label{sec: setup}
Consider the single-input--single-output (SISO) system depicted in Fig.~\ref{fig: system}.  Except for the plant output~$y_{\mathrm{p}}$ (and the exogenous input~$y_{\s{a}}$), every signal in the depicted control system is labeled such that it is indicative of which subsystem it forms an input to. For example,~$u_{\mathrm{c}}$ represents the input to the controller~$\mathcal{C}$. Similarly, the output of the controller is labeled $u_{\mathrm{q}}$ because it is the input to the integrator system~$\mathcal{Q}$, etc. Keeping this in mind, the constituent subsystems of the considered control system in Fig.~\ref{fig: system} are described as follows:
\begin{itemize}
    \item The linearized dynamics of the plant~$\mathcal{P}$ and the controller~$\mathcal{C}$ (around some nominal trajectories) are represented by the following linear time-varying (LTV) formulation
    \begin{equation} \label{eq: LTV dyn}
        \begin{split}
            \dot{x}_{\s{i}}(t) &= A_{\s{i}}(t) x_{\s{i}}(t) + B_{\s{i}}(t)u_{\s{i}}(t), \quad x_{\s{i}}(0) = \mathbf{0},  \\
            z_{\s{i}}(t) &= C_{\s{i}}(t) x_{\s{i}}(t),
        \end{split}
    \end{equation}
    where $x_{\mathrm{i}} \in \mathbb{R}^{n_{\mathrm{i}}}$, $A_{\s{i}} \in \mathsf{C}(\mathbb{R}_{\geq 0}, \mathbb{R}^{n_{\s{i}} \times n_{\s{i}}})$, $B_{\s{i}} \in \mathsf{C} (\mathbb{R}_{\geq 0}, \mathbb{R}^{n_{\s{i}}})$, $C_{\s{i}} \in \mathsf{C} (\mathbb{R}_{\geq 0}, \mathbb{R}^{1 \times n_{\s{i}}})$, $u_{\mathrm{i}} \in \mathbb{R}$, and the indicator subscript $\s{i} \in \{\s{p} , \s{c}\}$
    such that $\s{i} = \s{p}$ implies that the variable corresponds to the plant and $\s{i} = \s{c}$ means that it corresponds to the controller. Furthermore, for $\s{i} \in \{\s{p}, \s{c}\}$, the placeholder variable $z_{\mathrm{i}} \in \mathbb{R}$ denotes the output of \eqref{eq: LTV dyn} such that $z_{\s{p}} \equiv y_{\s{p}}$ (corresponding to the plant) and $ z_{\s{c}} \equiv u_{\s{q}}$ (corresponding to the controller). 

    Moving forward, to model the input--output mappings of these subsystems, we define the impulse-response kernel $g_{\s{i}}:\mathbb{R}_{\geq 0} \to \mathbb{R}_{\geq 0} \to \mathbb{R}$ associated with \eqref{eq: LTV dyn} as follows
\begin{equation} \label{eq: g intro}
    g_{\s{i}}(t,\tau) \coloneqq C_{\s{i}}(t)\,\Phi_{\s{i}}(t,\tau) \, B_{\s{i}}(\tau), 
\end{equation}
where the state-transition matrix $\Phi_{\s{i}}:\mathbb{R}_{\geq 0} \times \mathbb{R}_{\geq 0} \to \mathbb{R}^{n_{\s{i}} \times n_{\s{i}}}$ is given by the unique solution of 
\begin{equation} \label{eq: Phi}
    \frac{\partial}{\partial t} \Phi_{\s{i}}(t,\tau) = A_{\s{i}}(t) \, \Phi_{\s{i}}(t,\tau), \quad \Phi_{\s{i}}(\tau,\tau) = I, 
\end{equation}
for $\s{i} \in \{\s{p}, \s{c}\}$. Then, the input--output model for \eqref{eq: LTV dyn} is formulated as 
\begin{align} 
     z_{\s{i}}(t) &= \int_0^t g_{\s{i}}(t,\tau) \,  u_{\s{i}}(\tau)\, \dd \tau, \quad t \geq 0, \label{eq: plant g_p}
\end{align}
where $\s{i}\in \{\s{p}, \s{c}\}$, $z_{\s{p}} \equiv y_{\s{p}}$, and $z_{c} \equiv u_{q}$, as described in \eqref{eq: LTV dyn}. Note that the above modeling is standard in linear systems theory. For further details in this regard, see, for example, \cite{rugh1996linear}.

    \item The subsystem~$\mathcal{Q}$ represents a chain of $q \in \mathbb{Z}_{\geq 1}$ single-integrators with dynamics modeled as
    \begin{equation} \label{eq: SISO integrator}
        \begin{split}
            \dot{x}_{\mathrm{q}} & = \underbrace{\begin{bmatrix}
                0 & 1 & 0 & \cdots & 0 \\
         0 & 0 & 1 & \cdots & 0 \\
         \vdots & \vdots & \ddots& \ddots & \vdots\\
         0 & 0 & 0 &\cdots &1\\
         0 & 0 & 0 & \cdots & 0
            \end{bmatrix}}_{\eqqcolon {A}_{\s{q}}} x_{\mathrm{q}} + \underbrace{\begin{bmatrix}
                0 \\
                0 \\
                \vdots \\
                0\\
                1
            \end{bmatrix}}_{\eqqcolon B_{\s{q}} } u_{\mathrm{q}}, \\
                u_{\mathrm{p}} &= \, \underbrace{\begin{bmatrix}
                1 & 0 & 0& \cdots & 0
            \end{bmatrix}}_{\eqqcolon C_{\s{q}}} x_{\mathrm{\s{q}}},
        \end{split}
    \end{equation}
    where $x_{\s{q}} \in \mathbb{R}^q$, $A_\s{q} \in \mathbb{R}^{q\times q}$, $B_{q}\in \mathbb{R}^q$ and $C_{\s{q}}\in \mathbb{R}^{1 \times q}$. Since these dynamics are linear time-invariant (LTI), the state-transition matrix corresponding to \eqref{eq: SISO integrator} takes the form $\Phi_{\s{q}}(t) \coloneqq e^{A_{\s{q}}t}$, where $A_{\s{q}}$ is as given in \eqref{eq: SISO integrator}. Consequently, the input--output model of $\mathcal{Q}$ is given 
    by the following convolution integral equation: 
\begin{equation} \label{eq: up gq}
        u_{\s{p}}(t) = \int_0^t g_{\s{q}}(t-\tau) u_{q}(\tau)\, \dd \tau, \quad t \geq 0,
\end{equation}    
where\begin{align} \label{eq: gq}
    g_{\s{q}}(t) &\coloneqq C_{\s{q}} \Phi_{\s{q}}(t) B_{\s{q}} = \frac{t^{q-1}}{(q-1)!}, 
\end{align}
$C_{\s{q}}$ and $B_{\s{q}}$ as in \eqref{eq: SISO integrator}, and $q \in \mathbb{Z}_{\geq 1}$. 

\end{itemize}

The LTV representations of~$\mathcal{P}$ and $\mathcal{C}$ described in \eqref{eq: LTV dyn} correspond to some non-linear dynamics that are linearized around the nominal trajectories followed by the system. Therefore, under normal conditions ($y_{\s{a}} \equiv 0$), the closed-loop system is expected to follow these nominal trajectories, ensuring the deviations $u_{\s{q}}, \, u_{\s{c}}, \, u_{\s{p}}$, and $y_{\mathrm{p}} \equiv 0$ (neglecting the initial convergence period). In the next subsection, we account for the input~$y_{\s{a}}$ in our model, and formulate the problem studied. 

\subsection{Problem Statement} \label{sec: LTV prob formulation}
In this section, we explicate the role of the exogenous input~$y_{\s{a}}$ --- henceforth termed the false data injection attack (FDIA) --- as it pertains to the problem considered in this study. 
The FDIA is modeled as the output of a chain of $a \in \mathbb{Z}_{\geq 0}$ integrators such that 
\begin{align} \label{eq: FDIA LTV} 
    y_{\s{a}}(t) &\coloneqq \int_0^t \underbrace{\frac{(t-\tau)^{a-1}}{(a-1)!}}_{\eqqcolon g_{\s{a}}(t-\tau)}\,\tilde{h}(\tau) \, \dd \tau = \frac{h}{a!} t^a, \quad t \geq 0,
\end{align}
where $\tilde{h}(t) \equiv h \in \mathbb{R}$ denotes the constant input to the chain. In effect, the FDIA~$y_{\s{a}}$ represents a polynomial (monomial) of degree~$a \in \mathbb{Z}_{\geq 0}$, which grows unbounded over time (for $a \geq 1$) with a rate proportional to $h$. Correspondingly, in response to this FDIA, we posit that the system can be led away from its nominal course and rendered unable to meet its given objective. 

To mitigate the effects of such FDIAs and ensure the sanctity of the feedback loop, anomaly detectors (ADs) are installed to detect deviations of the signals in the control loop from their nominal trajectories. 
To counteract such security measures, the deviations produced in these signals by the FDIA must be small enough to bypass these detectors.
This evasion is crucial from the perspective of the attacker, for the measure of a successful attack is not the harm it can cause but rather the harm it can cause while remaining undetected\footnote{Once the attack is detected, in most cases, the system can simply be shut down if worst comes to worst, thus limiting the impact of the attack.}. 
From a security perspective, therefore, it is imperative to understand the underlying vulnerabilities of the system that would enable such FDIAs to remain stealthy. 


In view of the above discussion, 
we note that the \emph{stealth} of an FDIA is characterized by its ability to remain undetected in its effect on the signal(s) of interest for the AD. In the context of this study, owing to its ease of access and availability for monitoring, $u_{\s{q}}$ (the output of the controller) is deemed such a signal of interest. 
That is, we posit that by observing $u_{\s{q}}$, an inference could be made regarding the departure of system operation from nominal conditions. 
Correspondingly, the following notion of stealth is presented: 
\begin{defn} \label{def: stealth}
For the setup considered in Fig.~\ref{fig: system}, the FDIA~$y_{\s{a}} \in \mathbb{R}$ is said to be: 
\begin{enumerate}
    \item \textbf{\emph{$\epsilon \text{-stealthy}$}} with respect to $u_\s{\s{q}}$ if $$\sup_{t \geq 0}|u_{\s{q}}(t)| \leq \epsilon,$$ for some~$\epsilon > 0$. 
    \item \textbf{\emph{untraceably stealthy (u.s.)}} with respect to $u_{\s{q}}$ if it is $\epsilon \text{-stealthy}$ with respect to $u_{\s{q}}$ \emph{and}  
    $$\lim_{t \to \infty} |u_{\s{q}}(t)| = 0.$$
\end{enumerate}
\end{defn}

This leads to 
the following problem statement:
\vspace{0.2cm}
\begin{mdframed}
\begin{prob} \label{prob: prob}    
For a given~$a \in \mathbb{Z}_{\geq 0}$, does there exist $h \neq 0$ such that the corresponding FDIA $y_{\s{a}}$ in \eqref{eq: FDIA LTV} remains stealthy with respect to $u_\s{q}$ per Definition~\ref{def: stealth}?
\end{prob}
\end{mdframed}
\vspace{0.2cm}
In other words, we shall investigate the conditions (if there are any) that allow the FDIA modeled in \eqref{eq: FDIA LTV} --- growing unbounded over time for $a \geq 1$ --- to manage stealth against the considered setup.  
In the next section, we employ the theory of LVIEs from Section~\ref{sec: prelim LVIE} to address Problem~\ref{prob: prob}.


\section{Stealth of False Data Injection Attacks Against Linear Time-Varying Systems} \label{sec: main results}
In this section, we address Problem~\ref{prob: prob}. To that end, first, we represent the signal of interest $u_{\s{q}}$ as an LVIE of the form \eqref{eq: generic LVIE} and then, leveraging this mathematical formulation, we develop the theory of stealth for the FDIA \eqref{eq: FDIA LTV} in relation with Problem~\ref{prob: prob}.


\subsection{LVIE-based Representation of $u_{\s{q}}$} 
In this subsection, we model $u_{\s{q}}$ as an LVIE. 
However, before proceeding, we state the following result. 
\begin{lem}\label{lem: g cont} 
For any~$T > 0$, the kernels $g_{\s{p}}(t,\tau)$ and $g_{\s{c}}(t,\tau)$, defined as in \eqref{eq: g intro} 
(for $ \s{i} = \s{p}$ and $\s{i} = \s{c}$, respectively), are continuous over the domain~$\mathcal{D}_T \coloneqq \{(t,\tau) \in \mathbb{R}_{\geq 0} \times \mathbb{R}_{\geq 0}: 0 \leq t \leq T, \, 0 \leq \tau \leq t\}$. 
\end{lem}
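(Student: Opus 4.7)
The plan is to exploit the factored form $g_{\s{i}}(t,\tau) = C_{\s{i}}(t)\,\Phi_{\s{i}}(t,\tau)\,B_{\s{i}}(\tau)$ given in \eqref{eq: g intro} and reduce the problem to the joint continuity of the state-transition matrix $\Phi_{\s{i}}$ on $\mathcal{D}_T$. By the standing hypotheses on \eqref{eq: LTV dyn}, both $C_{\s{i}} \in \mathsf{C}(\mathbb{R}_{\geq 0}, \mathbb{R}^{1 \times n_{\s{i}}})$ and $B_{\s{i}} \in \mathsf{C}(\mathbb{R}_{\geq 0}, \mathbb{R}^{n_{\s{i}}})$ are continuous in their single arguments, and hence (regarded as functions of $(t,\tau)$ that are constant in one of the two variables) they are jointly continuous on $\mathcal{D}_T$. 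Once I establish that $\Phi_{\s{i}}$ is jointly continuous on $\mathcal{D}_T$, the claim follows immediately from the fact that the pointwise product of jointly continuous matrix-valued maps is continuous.

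The main step --- and the only substantive one --- is therefore the joint continuity of $\Phi_{\s{i}}$ on $\mathcal{D}_T$. Since $A_{\s{i}} \in \mathsf{C}(\mathbb{R}_{\geq 0}, \mathbb{R}^{n_{\s{i}} \times n_{\s{i}}})$, it is uniformly bounded on the compact interval $[0,T]$ by some $M > 0$. For fixed $\tau$, continuous dependence in $t$ follows directly from the defining ODE \eqref{eq: Phi}. To obtain continuity in $\tau$, I would use the dual identity $\frac{\partial}{\partial \tau}\Phi_{\s{i}}(t,\tau) = -\Phi_{\s{i}}(t,\tau)\,A_{\s{i}}(\tau)$, which, combined with the ODE in $t$, gives joint continuous differentiability on $\mathcal{D}_T$, and in particular joint continuity. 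If a self-contained argument is preferred, one can convert \eqref{eq: Phi} into the integral equation $\Phi_{\s{i}}(t,\tau) = I + \int_\tau^t A_{\s{i}}(s)\,\Phi_{\s{i}}(s,\tau)\,\dd s$ and apply a Gronwall estimate to bound $\|\Phi_{\s{i}}(t_1,\tau_1) - \Phi_{\s{i}}(t_2,\tau_2)\|$ by a quantity that vanishes as $(t_1,\tau_1) \to (t_2,\tau_2)$ inside $\mathcal{D}_T$, using $\|A_{\s{i}}\|_\infty \leq M$ on $[0,T]$.

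With joint continuity of $\Phi_{\s{i}}$ in hand, the conclusion is an immediate consequence of the algebra of continuous functions: the map $(t,\tau) \mapsto C_{\s{i}}(t)\,\Phi_{\s{i}}(t,\tau)\,B_{\s{i}}(\tau)$ is a composition and product of continuous maps on $\mathcal{D}_T$, hence continuous. I would likely shorten the writeup by citing \cite{rugh1996linear} for the joint continuity of the state-transition matrix and only spell out the product-of-continuous-functions step explicitly. The anticipated obstacle is purely expository --- deciding how much of the standard LTV machinery to reproduce versus cite --- rather than mathematical, since no new ideas are required beyond the classical theory of linear time-varying ODEs.
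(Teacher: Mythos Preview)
Your proposal is correct and follows essentially the same route as the paper: factor $g_{\s{i}}(t,\tau)=C_{\s{i}}(t)\Phi_{\s{i}}(t,\tau)B_{\s{i}}(\tau)$, establish joint continuity of $\Phi_{\s{i}}$ on $\mathcal{D}_T$, and conclude via the algebra of continuous functions. The only minor difference is in how the joint continuity of $\Phi_{\s{i}}$ is argued---the paper invokes the Peano--Baker series (each term continuous, absolute and uniform convergence on $\mathcal{D}_T$ by \cite[Theorem~3.3]{rugh1996linear}, hence continuity of the limit by \cite[Theorem~7.12]{rudin1976principles}), whereas you propose either the dual ODE identity $\partial_\tau\Phi_{\s{i}}=-\Phi_{\s{i}}A_{\s{i}}$ or a Gronwall estimate; all of these are standard and equally valid.
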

\begin{proof}
For any $T > 0$, since $A_{\s{i}}$ in \eqref{eq: LTV dyn} is continuous over $[0,T]$, each term in the Peano-Baker series\footnote{Though omitted here, the Peano-Baker series representation of the state transition matrix takes the form as shown in \cite[eq.~(12) of Chapter~3]{rugh1996linear}.} associated with the state transition matrix $\Phi_{\s{i}}(t,\tau)$ given in \eqref{eq: Phi} is continuous over $\mathcal{D}_T$. Together with the fact that the Peano-Baker series converges absolutely and uniformly for $(t,\tau) \in \mathcal{D}_T$ \cite[Theorem~3.3]{rugh1996linear}, this implies that $\Phi_{\s{i}} (t,\tau)$ is (jointly) continuous over the domain $\mathcal{D}_T$ \cite[Theorem~7.12]{rudin1976principles}. 
Since the product of jointly continuous functions is jointly continuous, the result follows from the continuity of $A_{\s{i}}$, $B_{\s{i}}$, and~$C_{\s{i}}$ supposed in \eqref{eq: LTV dyn}, and the construction of the kernel in \eqref{eq: g intro}, for all $\s{i} \in \{\s{p}, \s{c}\}$. 
\end{proof}
Next, we substitute $u_{\s{c}} = y_{\s{p}} + y_{\s{a}}$ (see Fig.~\ref{fig: system}) in \eqref{eq: plant g_p}, for $\s{i} = \s{c}$ (where $z_{\s{c}} \equiv u_{\s{q}}$), and apply \eqref{eq: FDIA LTV} to obtain
\begin{align}
     u_{\s{q}} (t) = & \,  \int_0^t g_{\s{c}}(t,\sigma) \bigl( y_\s{p}(\sigma) + y_{\s{a}}(\sigma) \bigr) \, \dd \sigma, \nonumber \\
     = & \, \int_0^t g_\s{c}(t,\sigma)\, \biggl(\int_0^\sigma g_\s{p}(\sigma,\tau) \, u_\s{p}(\tau) \, \dd \tau\biggr) \, \dd \sigma \nonumber \\
     &+  \int_0^t g_\s{c}(t,\sigma)\, \biggl(\int_0^\sigma g_{\s{a}}(\sigma-\tau) \tilde{h}(\tau) \, \dd \tau \biggr) \, \dd \sigma, \label{eq: pre transition} \\
     = & \, \int_0^t \, \underbrace{\biggl(\int _{\tau}^t g_\s{c}(t, \sigma) \, g_\s{p}(\sigma, \tau) \, \dd \sigma \biggr)}_{\eqqcolon G_{\s{c},\s{p}}(t, \tau)} \,u_\s{p}(\tau) \, \dd \tau \nonumber \\
     & \, +\int_0^t \, \underbrace{\biggl( \int_\tau ^t g_\s{c}(t,\sigma)\, g_{\s{a}}(\sigma-\tau) \, \dd \sigma \biggr)}_{\eqqcolon G_{\s{c},\s{a}}(t,\tau)} \, \tilde{h}(\tau) \, \dd \tau, \label{eq: gcp gca} 
\end{align}
where the transition from \eqref{eq: pre transition} to \eqref{eq: gcp gca} was made by iterating the integrals, which is allowed because of the continuity of the integrands  following Lemma~\ref{lem: g cont} 
\cite[Chapter~5, Theorem~7]{protter2012intermediate}. Furthermore, in \eqref{eq: gcp gca}, we observe that the contribution from the FDIA appears in the following term, which we simplify by employing the definition of $g_{\s{a}}$ and $G_{\s{c},\s{a}}$ from \eqref{eq: FDIA LTV} and \eqref{eq: gcp gca}, respectively, as follows:
\begin{align} \label{eq: phi}
\phi_{\s{c},\s{a}}(t) \coloneqq  \int_{0}^tG_{\s{c},\s{a}}(t,\tau)\tilde{h}(\tau) \dd \tau = \frac{h}{a!} \int_0^t g_{\s{c}}(t,\tau) \tau^a \dd \tau.
\end{align}
Then, by substituting \eqref{eq: up gq} and \eqref{eq: phi} into \eqref{eq: gcp gca}, and interchanging the order of integration again for the first term, we obtain the following representation of $ u_{\s{q}}$:
\begin{equation} \label{eq: uq volt}
     u_\s{q}(t) = \int _0^t G_{\s{c},\s{p}, \s{q}} (t,\tau) \, u_{\s{q}}(\tau)\, \dd \tau + \phi_{\s{c}, \s{a}}(t),
\end{equation}
where 
$G_{\s{c},\s{p}, \s{q}} (t,\tau)\coloneqq \int_\tau^tG_{\s{c},\s{p}}(t,\sigma) \, g_{q}(\sigma - \tau) \, \dd \sigma,\, 0\leq \tau \leq t$.  
We note at once that \eqref{eq: uq volt} is an LVIE of the second kind. Furthermore, under Lemma~\ref{lem: g cont}, \eqref{eq: uq volt} conforms to the well-posed structure assumed for \eqref{eq: generic LVIE} in Section~\ref{sec: prelim LVIE}, thereby guaranteeing the existence and uniqueness of the solution~$u_{\s{q}}$ to \eqref{eq: uq volt} over $t \geq 0$, corresponding to the continuous input~$\phi_{\s{c}, \s{a}}$ defined in \eqref{eq: phi}. 

By modeling $u_{\s{q}}$ in the form \eqref{eq: uq volt}, we transform the problem of stealth for the FDIA to the problem of stability for \eqref{eq: uq volt}. In the next subsection, we employ this insight to address Problem~\ref{prob: prob}. 

\subsection{Stealth and Linear Volterra Integral Equations}
In this subsection, we shall present the main results of the study as they pertain to Problem~\ref{prob: prob}. 
That is, given the LVIE \eqref{eq: uq volt}, we seek conditions under which, corresponding to the input~$\phi_{\s{c}, \s{a}}$ modeled as in \eqref{eq: phi}, its solution~$u_{\s{q}}$ exhibits the following properties: i)~for a given $\epsilon >0$, it remains $\epsilon \text{-bounded}$ for all time ($\epsilon \text{-stealth}$), and ii)~$\lim_{t \to \infty} u_{\s{q}}(t) = 0$ (untraceable stealth). 
To that end, we note that if the LVIE is stable per Definition~\ref{def: Stability ZB}, then, for any $\epsilon > 0$, there exists $\delta(\epsilon) >0$ such that if we bound the input~$\phi_{\s{c}, \s{a}}$ by $\delta(\epsilon)$, then $u_{\s{q}}$ will remain $\epsilon \text{-bounded}$ for all $t$. Similarly, if the LVIE \eqref{eq: uq volt} is asymptotically stable, then there would exist some other bound~$\delta_1$ such that if the input $\phi_{\s{c}, \s{a}}$ remains $\delta_1 \text{-bounded}$ and, additionally, the input converged to~$0$ in the asymptotic sense, then $\lim_{t \to \infty}u_{\s{q}}(t) = 0$. Therefore, the stability of the LVIE provides us with a framework to argue about the stealth of~$y_{\s{a}}$ with respect to $u_{\s{q}}$. However, stability alone is not enough. Additionally, we need the input~$\phi_{\s{c},\s{a}}$: i)~bounded for $\epsilon\text{-stealth}$ and ii)~asymptotically converging to~$0$ for untraceable stealth, as discussed above. 

In light of the above discussion, we present the following sufficient conditions to obtain the sought-after attributes for~$\phi_{\s{c}, \s{a}}$.
\begin{lem} \label{lem: FDIA and stealth}
Consider the polynomial FDIA signal $y_{\s{a}}$ of degree $a \in \mathbb{Z}_{\geq 0}$ and weight $h \in \mathbb{R}$, as modeled in \eqref{eq: FDIA LTV}. 
\begin{enumerate}[label=\alph*)]
    \item \label{lem: FDIA and stealth: epsilon} For any $\delta > 0$, if
    \begin{equation} \label{eq: gctaua bounded}
        \sup_{t\geq 0} \int_0^t g_c(t,\tau)\, \tau^a \dd \tau < \infty, 
    \end{equation}
    then there exists~$h \in \mathbb{R} \setminus 0$ such that the input~$\phi_{\s{c},\s{a}}$ defined as in \eqref{eq: phi} remains $\delta \text{-bounded}$:
    \begin{equation} \label{eq: phi bounded}
        \sup_{t\geq 0} |\phi_{\s{c},\s{a}}(t)| < \delta. 
    \end{equation}
\item  Furthermore, if 
\begin{equation} \label{eq: gctaua AS phi}
        \lim_{t \to \infty} \int_0^t g_\s{c}(t,\tau)\, \tau^a \dd \tau = 0, 
    \end{equation}
    then 
    \begin{equation}\label{eq: phi AS}
        \lim_{t \to \infty} \phi_{\s{c},\s{a}}(t) = 0. 
    \end{equation}
\end{enumerate}
\end{lem}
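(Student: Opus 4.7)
The plan is to exploit the closed form \eqref{eq: phi}, which expresses $\phi_{\s{c},\s{a}}(t)$ as the scalar multiple $(h/a!)\int_0^t g_{\s{c}}(t,\tau)\tau^a\dd\tau$ of the very integral whose behavior is controlled by hypotheses \eqref{eq: gctaua bounded} and \eqref{eq: gctaua AS phi}. Both parts should therefore reduce to transporting the relevant estimate through a scalar constant, so the bulk of the work is really bookkeeping around the choice of~$h$.

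For part~(a), I would first invoke the paper's standing non-negativity assumption on the controller kernel ($g_{\s{c}}(t,\tau)\geq 0$ on $0\leq \tau\leq t$), which makes the integrand $g_{\s{c}}(t,\tau)\tau^{a}$ non-negative for every admissible $(t,\tau)$. Setting $M \coloneqq \sup_{t\geq 0}\int_{0}^{t}g_{\s{c}}(t,\tau)\tau^{a}\dd\tau$, the hypothesis \eqref{eq: gctaua bounded} places $M$ in $[0,\infty)$. Taking absolute values in \eqref{eq: phi} and using that the integral is itself non-negative yields $|\phi_{\s{c},\s{a}}(t)|\leq (|h|/a!)\,M$ uniformly in $t\geq 0$. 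A suitable $h$ can then be produced by cases: if $M=0$, any $h\in\mathbb{R}\setminus\{0\}$ will do; if $M>0$, any $h$ with $0<|h|<\delta\,a!/M$ yields the strict inequality~\eqref{eq: phi bounded}.

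For part~(b), I would keep any fixed $h\in\mathbb{R}$ (in particular, the one chosen in part~(a)) and pass the limit through the constant factor in \eqref{eq: phi}, obtaining $\lim_{t\to\infty}\phi_{\s{c},\s{a}}(t)=(h/a!)\lim_{t\to\infty}\int_{0}^{t}g_{\s{c}}(t,\tau)\tau^{a}\dd\tau=0$ directly from \eqref{eq: gctaua AS phi}, which is precisely \eqref{eq: phi AS}.

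I do not anticipate a substantive obstacle; the proof is essentially an unpacking of the definition of $\phi_{\s{c},\s{a}}$. The one delicate point worth flagging is the role of the non-negativity of $g_{\s{c}}$: it is what lets the one-sided supremum in \eqref{eq: gctaua bounded} double as an absolute bound on the integral, and what lets \eqref{eq: gctaua AS phi} control the signed limit without further hypotheses. Absent non-negativity, both conditions would need to be recast in terms of $|g_{\s{c}}|$ (or, equivalently, in terms of the modulus of the integral) for the same conclusions to go through.
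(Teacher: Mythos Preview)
Your proposal is correct and follows essentially the same approach as the paper's proof: define $M=\sup_{t\geq 0}\int_0^t g_{\s{c}}(t,\tau)\tau^a\dd\tau$, bound $|\phi_{\s{c},\s{a}}(t)|\leq (|h|/a!)M$, and pick $|h|$ small enough, then pass the limit through the scalar for part~(b). Your treatment is in fact slightly more careful than the paper's, since you handle the degenerate case $M=0$ separately and make explicit the role of non-negativity of $g_{\s{c}}$ in turning the one-sided supremum into an absolute bound.
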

\begin{proof}
\begin{enumerate}[label=\alph*)]
    \item Let $M\coloneqq \sup_{t\geq 0} \int_0^t g_c(t,\tau)\, \tau^a \dd \tau$. 
From \eqref{eq: phi}, we obtain 
\begin{align}
    \sup_{t\geq 0}|\phi_{
    \s{c},\s{a}}(t)| \leq \frac{|h|}{a!} |M|. 
\end{align}
Choosing~$h \neq 0$ such that $|h| < \frac{a!\delta}{|M|}$ then ensures \eqref{eq: phi bounded}. 
\item It follows from the application of the limit `$\lim_{t \to \infty}$' to \eqref{eq: phi} under \eqref{eq: gctaua AS phi}. 
\end{enumerate}
This concludes the proof.
\end{proof}
Lemma~\ref{lem: FDIA and stealth} stipulates that if the impulse-response kernel of the controller $g_\s{c}$ has sufficient decay to ensure integrability against the polynomial growth term $\tau^{a}$, where $\tau \in [0, t]$, $t \geq 0$, and $a \in \mathbb{Z}_{\geq0}$,  
such that \eqref{eq: gctaua bounded} and \eqref{eq: gctaua AS phi} hold, 
then $\phi_{\s{c}, \s{a}}$ remains bounded and asymptotically converges to $0$, as stated in \eqref{eq: phi bounded} and \eqref{eq: phi AS}, respectively. Without having assumed a prior structure for $g_\s{c}$, Lemma~\ref{lem: FDIA and stealth} then dictates the mathematical (decay) properties needed for $g_{\s{c}}$ to bound~$\phi_{\s{c},\s{a}}$. 
To proceed with the strategy discussed before, the following assumption is warranted:
\begin{assum} \label{assm: uq < infty LVIE} We suppose the following: 
\begin{enumerate}[label=\alph*)]
        \item The LVIE \eqref{eq: uq volt} is stable per Definition~\ref{def: Stability ZB}. 
        \item $\sup_{t \geq 0}$ $\int_0^1 g_c(t,\tau) \, \dd \tau < \infty$.
        \item For any fixed~$t \geq0$,~$$G_{\s{c},\s{p}}(t, \tau) \geq g_\s{c}(t, \tau) \geq 0, \quad \forall \tau : 0 \leq \tau\leq t,$$
        where the kernel $G_{\s{c},\s{p}}$ is as defined in \eqref{eq: gcp gca}.
    \end{enumerate}
\end{assum}
\begin{rem}
Assumption~\ref{assm: uq < infty LVIE}-(a) ensures the internal stability of the closed-loop mapping of the controller's initial states to its output, 
which is essential for the well-posedness of Problem~\ref{prob: prob}. 
Assumption~\ref{assm: uq < infty LVIE}-(b) further guarantees the uniform boundedness of the controller's kernel, which is required for its (BIBO) stability. 
\mbox{Assumption~\ref{assm: uq < infty LVIE}-(c)}, however, is admittedly restrictive. Still, we adopt it in this study because i)~it ensures that the kernel~$G_{\s{c},\s{p}, \s{q}}(t,\tau)$ of \eqref{eq: uq volt} is non-negative over the domain of interest, thereby permitting the direct application of Lemmas~\ref{lem: iff stable ZB} and~\ref{lem: iff AS ZB}, and ii)~it allows us to focus the analysis on $g_{\s{c}}(t,\tau)$ instead of~$G_{\s{c},\s{p}}(t,\tau)$ without having to impose further structural constraints on $g_{\s{p}}(t,\tau)$ to state our results. Relaxing this condition 
is a topic for future work. 
\end{rem}
Having established the needed assumptions, we can present the following result.
\begin{lem} \label{lem: gc tauq < infty}
Under Assumption~\ref{assm: uq < infty LVIE}, the following are true:
    \begin{enumerate}[label=\alph*)]
    \item For $q \in \mathbb{Z}_{\geq 1}$, 
    \begin{equation} \label{eq: gctauq infty}
        \sup_{t \geq 0} \,  \int_0^t g_\s{c}(t, \tau) \, \tau^q \, \dd \tau \, < \infty.
    \end{equation}
    \item For $q \in \mathbb{Z}_{\geq 1}$ and $a \in \mathbb{Z}_{\geq 0}$, if $a \leq q$, then \eqref{eq: gctaua bounded} holds.
    \end{enumerate}
\end{lem}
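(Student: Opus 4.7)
For part (a), my plan is to read off what Assumption~\ref{assm: uq < infty LVIE} together with Lemma~\ref{lem: iff stable ZB} tells us about the kernel $G_{\s{c},\s{p},\s{q}}$ of the LVIE \eqref{eq: uq volt}, and then push that information down to $g_{\s{c}}$. Specifically, Assumption~\ref{assm: uq < infty LVIE}-(a) says the zero solution of \eqref{eq: uq volt} is stable, and Assumption~\ref{assm: uq < infty LVIE}-(c) combined with $g_{\s{q}}\geq 0$ implies $G_{\s{c},\s{p},\s{q}}(t,\tau)\geq 0$ on the relevant triangular domain. Hence Lemma~\ref{lem: iff stable ZB} applies and yields
\[
\sup_{t\geq 0}\int_0^t G_{\s{c},\s{p},\s{q}}(t,\tau)\,\dd\tau < \infty.
\]

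Next I would compute this integral explicitly. Since all integrands are continuous over the triangle $\{0\leq \tau\leq \sigma\leq t\}$ (the continuity of $g_{\s{p}}$ and $g_{\s{c}}$ follows from Lemma~\ref{lem: g cont}, and $g_{\s{q}}$ is an explicit polynomial), Fubini's theorem justifies swapping the order of integration in the iterated integral defining $G_{\s{c},\s{p},\s{q}}$. A direct computation gives
\[
\int_0^t G_{\s{c},\s{p},\s{q}}(t,\tau)\,\dd\tau
= \int_0^t G_{\s{c},\s{p}}(t,\sigma)\int_0^\sigma \frac{(\sigma-\tau)^{q-1}}{(q-1)!}\,\dd\tau\,\dd\sigma
= \frac{1}{q!}\int_0^t G_{\s{c},\s{p}}(t,\sigma)\,\sigma^q\,\dd\sigma.
\]
Invoking the pointwise bound $G_{\s{c},\s{p}}(t,\sigma)\geq g_{\s{c}}(t,\sigma)\geq 0$ from Assumption~\ref{assm: uq < infty LVIE}-(c), I conclude
\[
\frac{1}{q!}\int_0^t g_{\s{c}}(t,\sigma)\,\sigma^q\,\dd\sigma
\leq \int_0^t G_{\s{c},\s{p},\s{q}}(t,\tau)\,\dd\tau,
\]
and taking $\sup_{t\geq 0}$ on both sides gives \eqref{eq: gctauq infty}.

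For part (b), I would simply decouple the low-$\tau$ and high-$\tau$ parts of the integral by splitting at $\tau=1$. For $t\geq 1$, write
\[
\int_0^t g_{\s{c}}(t,\tau)\,\tau^a\,\dd\tau
= \int_0^1 g_{\s{c}}(t,\tau)\,\tau^a\,\dd\tau + \int_1^t g_{\s{c}}(t,\tau)\,\tau^a\,\dd\tau.
\]
On $[0,1]$ we have $\tau^a\leq 1$, so Assumption~\ref{assm: uq < infty LVIE}-(b) controls the first piece uniformly in $t$; on $[1,t]$ we use $a\leq q$ to get $\tau^a\leq \tau^q$, and part (a) controls the second piece uniformly in $t$. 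For $t<1$ the second piece is absent and Assumption~\ref{assm: uq < infty LVIE}-(b) alone suffices. Taking the supremum over $t\geq 0$ yields \eqref{eq: gctaua bounded}.

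The only delicate step is the Fubini interchange in part (a), and it is routine once the continuity assertions from Lemma~\ref{lem: g cont} are in hand (the iterated-integral structure of $G_{\s{c},\s{p}}$ inherits continuity from $g_{\s{c}}$ and $g_{\s{p}}$). Everything else is bookkeeping plus the monomial comparison $\tau^a\leq \max(1,\tau^q)$.
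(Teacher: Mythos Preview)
Your proposal is correct and follows essentially the same route as the paper: part~(a) combines Lemma~\ref{lem: iff stable ZB} with Fubini and the pointwise bound $G_{\s{c},\s{p}}\geq g_{\s{c}}$ from Assumption~\ref{assm: uq < infty LVIE}-(c), and part~(b) splits the integral at $\tau=1$ and uses $\tau^a\leq \tau^q$ on $[1,t]$ together with Assumption~\ref{assm: uq < infty LVIE}-(b) on $[0,1]$. Your bound $\tau^a\leq 1$ on $[0,1]$ is in fact a slight streamlining of the paper's $\tau^a\leq 1+\tau^q$, but the overall strategy is the same.
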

A proof for Lemma~\ref{lem: gc tauq < infty} is included in Appendix~\ref{sec: appndx lemma gc tauq < infty}. 
Lemma~\ref{lem: gc tauq < infty} dictates that, under Assumptions~\ref{assm: uq < infty LVIE}-(b) and~\ref{assm: uq < infty LVIE}-(c), the stability of the LVIE \eqref{eq: uq volt} supposed in Assumption~\ref{assm: uq < infty LVIE}-(a) implies that the controller's kernel $g_\s{c}$ has sufficient decay to maintain the integrability of the convolution-like integral in \eqref{eq: gctauq infty} despite the polynomial growth term~$\tau^q$, where $\tau \in [0,t]$, $t \geq 0$, and $q\in \mathbb{Z}_{\geq 1}$. Consequently, if $a \leq q$, then the controller can also handle the growth term $\tau^a$ in \eqref{eq: gctaua bounded}, for $\tau \in [0,t]$, $t \geq 0$, and $a \in \mathbb{Z}_{\geq 0}$. 
This brings us to the first main result of the section. 
\begin{thrm} \label{thm: bounded attack}
Suppose Assumption~\ref{assm: uq < infty LVIE} holds. 
For $q \in \mathbb{Z}_{\geq 1}$ and $a \in \mathbb{Z}_{\geq 0}$ described as in \eqref{eq: gq} and \eqref{eq: FDIA LTV}, respectively, if~$a \leq q$, then, for any given $\epsilon > 0$, there exists~$h \in \mathbb{R} \setminus 0$ such that the corresponding FDIA $y_{\s{a}}$ designed per \eqref{eq: FDIA LTV} remains $\epsilon\text{-stealthy}$ with respect to $ u_{\s{q}}$ per Definition~\ref{def: stealth}.
\end{thrm}

\begin{proof}
Under Assumption~\ref{assm: uq < infty LVIE}-(a), for any given $\epsilon > 0$, $\exists \delta(\epsilon) >0$ such that $\sup_{t \geq 0} | \phi_{\s{c}, \s{a}}(t)| < \delta(\epsilon)  \implies \sup_{t\geq 0} |u_{\s{q}}(t)| < \epsilon$ (see Definition~\ref{def: Stability ZB}). Therefore, if $h$ can be chosen such that $\sup_{t\geq 0} |\phi_{\s{c},\s{a}}(t)| < \delta(\epsilon)$ then $\sup_{t \geq 0} |u_{\s{q}}(t)| < \epsilon$, which implies $\epsilon \text{-stealth}$ for the FDIA $y_{\s{a}}$ with respect to $u_{\s{q}}$ per Definition~\ref{def: stealth}. This is established through Lemmas~\ref{lem: FDIA and stealth}-(a) 
and~\ref{lem: gc tauq < infty}-(b), which completes the proof.
\end{proof}

We move on to pursue the conditions enabling untraceable stealth for the FDIA with respect to $u_{\s{q}}$. Similar to $\epsilon \text{-stealth}$, for untraceable stealth, we need to ensure appropriate behavior of the input $\phi_{\s{c}, \s{a}}$ to \eqref{eq: uq volt} such that u.s. follows from the stability properties stated in Definition~\ref{def: AS ZB}. Keeping this in mind, first, we make the following assumption. 
\begin{assum} \label{assm: uq AS LVIE} We suppose:
\begin{enumerate}[label=\alph*)]
    \item The LVIE in \eqref{eq: uq volt} is asymptotically stable per Definition~\ref{def: AS ZB}.
    \item $\lim_{t \to \infty}\int_{0}^1g(t,\tau)\dd\tau = 0$.
    \item $\lim_{T \to 0^+ } \sup_{t \geq 0} \int _0^T g_{\s{c}}(t,\tau) \dd\tau = 0$.
\end{enumerate}
\end{assum}
\begin{rem} 
As discussed earlier, to prove u.s. for the FDIA \eqref{eq: FDIA LTV}, we rely on \eqref{eq: uq volt} being asymptotically stable, which then necessitates Assumption~\ref{assm: uq AS LVIE}-(a). Assumptions~\ref{assm: uq AS LVIE}-(b) and~\ref{assm: uq AS LVIE}-(c) are additionally introduced to impose a regularity constraint on the asymptotic behavior of the controller's impulse-response, and ensure the continuity of $g_{\s{c}}(t,\tau)$ near $\tau = 0$, respectively, which we shall require to present our results.
\end{rem}
With Assumption~\ref{assm: uq AS LVIE} in place, we state the following result. 
\begin{lem} \label{lemm: AS}
    Under Assumptions~\ref{assm: uq < infty LVIE} and \ref{assm: uq AS LVIE}, the following are true:
    \begin{enumerate}[label=\alph*)]
    \item For $q \in \mathbb{Z}_{\geq 1}$, 
    \begin{equation} \label{eq: gctauq-1 AS}
        \lim_{t \to \infty} \int_0^t g_\s{c}(t,\tau) \, \tau^{q-1} \, \dd \tau = 0.
    \end{equation}
    \item For $q \in \mathbb{Z}_{ \geq 1}$ and $a \in \mathbb{Z}_{\geq 0}$, if $a < q$, then \eqref{eq: gctaua AS phi} holds.
    \end{enumerate}
\end{lem}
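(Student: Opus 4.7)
The plan is to mirror the strategy used for Lemma~\ref{lem: gc tauq < infty}, but with the asymptotic-stability characterization of Lemma~\ref{lem: iff AS ZB} replacing the boundedness characterization of Lemma~\ref{lem: iff stable ZB}. First I observe that, by Assumption~\ref{assm: uq < infty LVIE}-(c) the kernel of the LVIE~\eqref{eq: uq volt} is non-negative and, by Assumption~\ref{assm: uq AS LVIE}-(a), that LVIE is asymptotically stable, so Lemma~\ref{lem: iff AS ZB} yields $\lim_{t\to\infty}\int_0^T G_{\s{c},\s{p},\s{q}}(t,\tau)\dd\tau = 0$ for every fixed $T>0$. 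Substituting the definition of $G_{\s{c},\s{p},\s{q}}$, using $G_{\s{c},\s{p}}(t,\sigma)\geq g_{\s{c}}(t,\sigma)\geq 0$ from Assumption~\ref{assm: uq < infty LVIE}-(c), and interchanging the order of integration (justified by the joint continuity of Lemma~\ref{lem: g cont}), I would arrive at the lower bound
$$\int_0^T G_{\s{c},\s{p},\s{q}}(t,\tau)\dd\tau \,\geq\, \frac{1}{q!}\left[\int_0^T g_{\s{c}}(t,\sigma)\sigma^q\dd\sigma + \int_T^t g_{\s{c}}(t,\sigma)\bigl(\sigma^q - (\sigma-T)^q\bigr)\dd\sigma\right].$$
Both summands are non-negative, so, since their sum tends to $0$, each tends to $0$; I will only use the second one, combined with the elementary bound $\sigma^q - (\sigma-T)^q \geq qT(\sigma-T)^{q-1}$ (valid for $q\geq 1$ and $\sigma\geq T$ by monotonicity of $s\mapsto s^{q-1}$), to conclude $\lim_{t\to\infty}\int_T^t g_{\s{c}}(t,\sigma)(\sigma-T)^{q-1}\dd\sigma = 0$ for every $T>0$.

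The main obstacle is then converting this $(\sigma-T)^{q-1}$-weighted conclusion back into the desired $\sigma^{q-1}$-weighted integral. Because $g_{\s{c}}$ is time-varying, a shift of the integration variable is unavailable, so I proceed purely by pointwise bounds on the weight. Taking $T = 1/2$ and restricting to $\sigma\geq 1$, where $\sigma - 1/2 \geq \sigma/2$ and hence $(\sigma-1/2)^{q-1}\geq (\sigma/2)^{q-1}$, gives
$$\lim_{t\to\infty}\int_1^t g_{\s{c}}(t,\sigma)\sigma^{q-1}\dd\sigma \,\leq\, 2^{q-1}\lim_{t\to\infty}\int_{1/2}^t g_{\s{c}}(t,\sigma)(\sigma-1/2)^{q-1}\dd\sigma \,=\, 0.$$
On the complementary range $[0,1]$, $\sigma^{q-1}\leq 1$, so Assumption~\ref{assm: uq AS LVIE}-(b) gives $\lim_{t\to\infty}\int_0^1 g_{\s{c}}(t,\sigma)\sigma^{q-1}\dd\sigma = 0$. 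Adding the two establishes part~(a).

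For part~(b), with $a\leq q-1$, I would split $\int_0^t g_{\s{c}}(t,\sigma)\sigma^a\dd\sigma$ at $\sigma=1$ and bound $\sigma^a\leq 1$ on $[0,1]$ (closed by Assumption~\ref{assm: uq AS LVIE}-(b)) and $\sigma^a\leq \sigma^{q-1}$ on $[1,t]$ (closed by part~(a)). Assumption~\ref{assm: uq AS LVIE}-(c) does not appear to enter this argument directly; I expect it to be needed elsewhere in the section, e.g.\ to justify limit--integral interchanges when passing from $\phi_{\s{c},\s{a}}$ back to~$u_{\s{q}}$.
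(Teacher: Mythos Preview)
Your argument is correct, and in fact more elementary than the paper's. Both proofs start identically: apply Lemma~\ref{lem: iff AS ZB} under Assumption~\ref{assm: uq AS LVIE}-(a), unwind $G_{\s{c},\s{p},\s{q}}$ via Assumption~\ref{assm: uq < infty LVIE}-(c), and deduce that $\lim_{t\to\infty}\int_T^t g_{\s{c}}(t,\sigma)(\sigma-T)^{q-1}\dd\sigma = 0$ for every fixed $T>0$. (The paper reaches this via the full binomial expansion of $(\mu+T)^q-\mu^q$; your one-line lower bound $\sigma^q-(\sigma-T)^q\ge qT(\sigma-T)^{q-1}$ is a clean shortcut to the same conclusion.) The real divergence is in recovering the $\sigma^{q-1}$ weight. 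The paper sends $T\to 0^+$ and must then justify swapping the iterated limits $\lim_{T\to 0^+}\lim_{t\to\infty}$; this is done through a separate uniform-convergence lemma (Lemma~\ref{lem: uniform conv}), whose proof is where Assumption~\ref{assm: uq AS LVIE}-(c) actually enters. You instead freeze $T=1/2$, use the pointwise comparison $(\sigma-1/2)^{q-1}\ge (\sigma/2)^{q-1}$ on $[1,\infty)$ to handle the tail, and invoke Assumption~\ref{assm: uq AS LVIE}-(b) on $[0,1]$. This sidesteps the limit interchange entirely and, as you suspected, makes Assumption~\ref{assm: uq AS LVIE}-(c) unnecessary for Lemma~\ref{lemm: AS} itself. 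Part~(b) is essentially the same in both proofs.
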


This leads us to the second main result of the section.
\begin{thrm} \label{thm: u.s.}
Suppose Assumptions~\ref{assm: uq < infty LVIE} and \ref{assm: uq AS LVIE} hold. For $q \in \mathbb{Z}_{\geq 1}$ and $a \in \mathbb{Z}_{\geq 0}$ described as in \eqref{eq: gq} and \eqref{eq: FDIA LTV}, respectively, if $a < q$, then there exists $h \in \mathbb{R} \setminus 0$ such that the corresponding FDIA~$y_a$ designed per \eqref{eq: FDIA LTV} remains untraceably stealthy with respect to $u_{\s{q}}$ per Definition~\ref{def: stealth}. 
\end{thrm}

\begin{proof}
Firstly, the FDIA $y_{\s{a}}$ is deemed $\epsilon \text{-stealthy}$ with respect to $u_{\s{q}}$ per Theorem~\ref{thm: bounded attack}.
Then, since \eqref{eq: uq volt} is assumed to be asymptotically stable, we infer (from Definition~\ref{def: AS ZB}) that there exists $\delta >0$ such that $\sup_{t\geq 0}| \phi_{\s{c}, \s{a}}(t) | < \delta$ and $\lim_{t \to \infty} \phi_{\s{c}, \s{a}}(t) = 0 \implies \lim_{t \to \infty} u_{\s{q}}(t) = 0$, which then renders the FDIA~$y_{\s{a}}$ u.s. with respect to $u_{\s{q}}$ per Definition~\ref{def: stealth}. Therefore, to complete the proof, it suffices to show that there exists $h \neq 0$ such that i)~$\sup_{t\geq 0}|u_{\s{q}}(t)|< \delta$ and ii)~$\lim_{t \to \infty} u_{\s{q}}(t) = 0$. The first condition follows from Lemmas~\ref{lem: FDIA and stealth}-(a) and~\ref{lem: gc tauq < infty}-(b), and the second from Lemmas~\ref{lem: FDIA and stealth}-(b) and~\ref{lemm: AS}-(b). This completes the proof. 
\end{proof}


\section{Numerical Examples} \label{sec: num ex}
In this section, we provide numerical validation of our theoretical results. 
\subsection*{Example~1 (Non-negative-kernel feedback):}
For this example, we consider the following instantiation of the control system depicted in Fig.~\ref{fig: system}: 
\begin{itemize}
\item The system $\mathcal{Q}$ is composed of a chain of $q =2$ integrators, and its impulse-response kernel $g_{\s{q}}(t) = t$, for $t \geq 0$.
\item The controller $\mathcal{C}$ is represented by the LTV system
\begin{equation}
\begin{split}    
    \dot{x}_{\s{c}} & = -t^2 x_{\s{c}}+ u_{\s{c}},\\ 
    u_{\s{q}} &= x_{\s{c}}.
\end{split}
\end{equation}
Correspondingly, its impulse-response kernel is given by
\begin{equation} \label{eq: gc Ex1}
    g_\s{c}(t,\tau) = e^{-\frac{1}{3}(t^3 - \tau^3)}, \quad 0 \leq \tau \leq t.
\end{equation} 
\item The plant~$\mathcal{P}$ is the trivial unity-gain mapping such that $y_{\s{p}}(t) \equiv u_{\s{p}}(t)$. Correspondingly, $g_{\s{p}}(t,\tau) \equiv \delta_\s{D}(t - \tau)$, where $\delta_\s{D}$ denotes the Dirac delta function. 
\end{itemize}
It can be verified that Assumptions~\ref{assm: uq < infty LVIE} and~\ref{assm: uq AS LVIE} hold for this setup. Therefore, Theorems~\ref{thm: bounded attack} and~\ref{thm: u.s.} apply. To validate these results, numerical simulations are included in Fig.~\ref{fig: Ex1}.

Firstly, we consider an FDIA constructed per \eqref{eq: FDIA LTV} for $a = 2$ and $h =1$, as shown in Fig.~\ref{fig: FDIA parabola}. For this case, since $a = q$, Theorem~\ref{thm: bounded attack} stipulates that $u_{\s{q}}$ will remain bounded for all $t$ by some finite $\epsilon > 0$. This is indeed confirmed in Fig.~\ref{fig: uq a 2}, which renders the FDIA growing unbounded over time $\epsilon \text{-stealthy}$, where $\epsilon < 0.4$. In a similar vein,  an FDIA constructed per \eqref{eq: FDIA LTV} for $a = 1$ and $h =1$ is then considered in Fig.~\ref{fig: FDIA ramp}. In this scenario, since $a < q$, Theorem~\ref{thm: u.s.} dictates that $u_{\s{q}}$ will not only remain bounded, but will also converge to $0$ asymptotically. The plot in Fig.~\ref{fig: uq a 1} confirms this, thereby rendering the (unbounded) FDIA untraceably stealthy with respect to $u_{\s{q}}$ per Definition~\ref{def: stealth}.
\begin{figure}[t]
    \centering
    \begin{subfigure}[t]{0.24\textwidth}
        \centering
        \includegraphics[width=\textwidth,keepaspectratio]{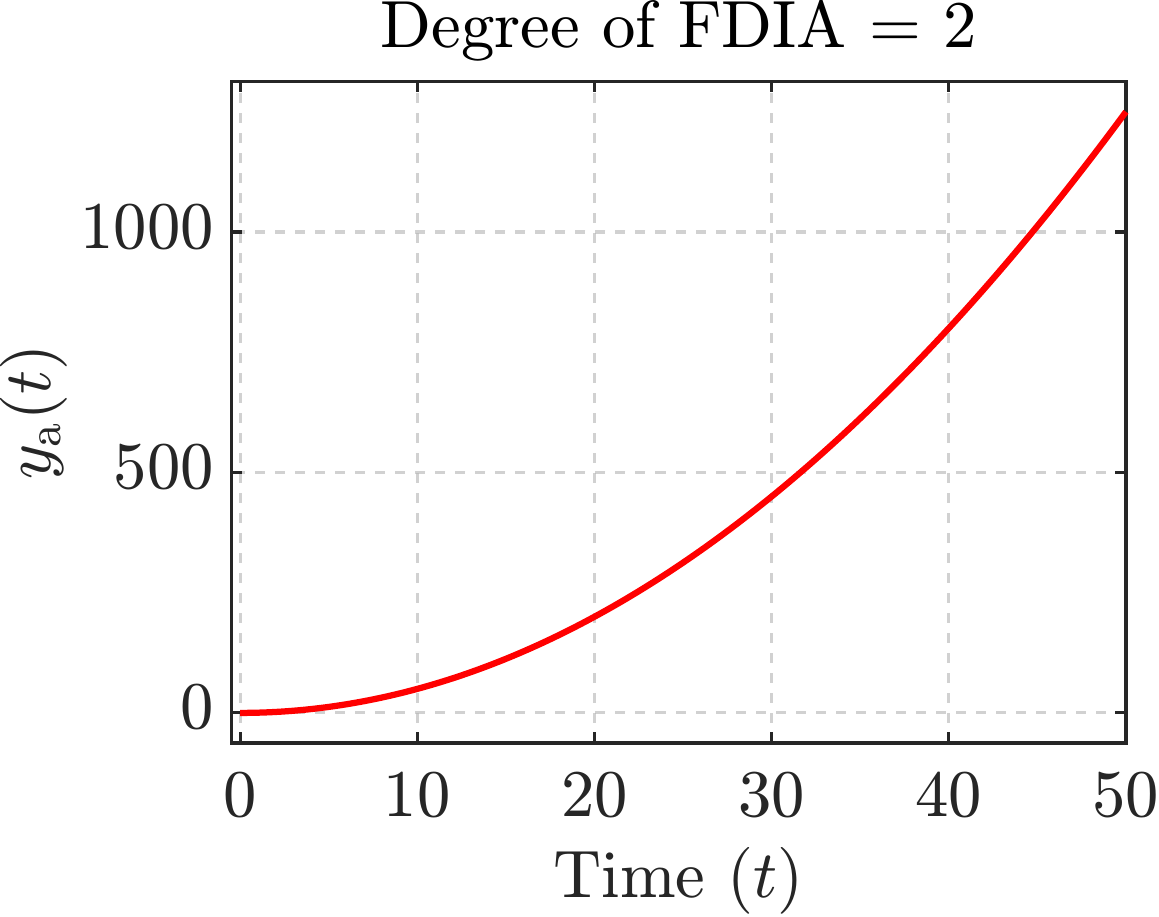}
        \caption{}
        \label{fig: FDIA parabola}
    \end{subfigure}%
    \hfill%
    \begin{subfigure}[t]{0.24\textwidth}
        \centering
        \includegraphics[width=\textwidth,keepaspectratio]{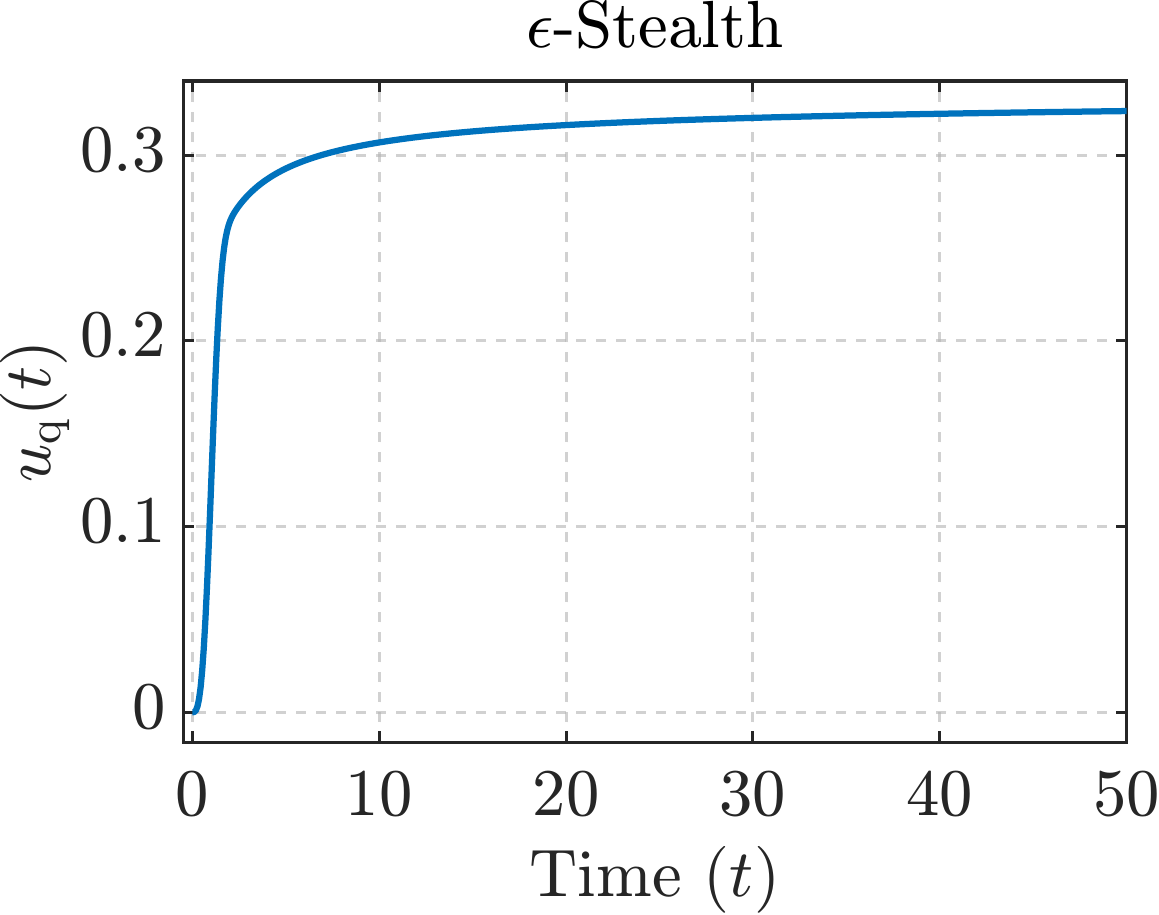}
        \caption{}
        \label{fig: uq a 2}
    \end{subfigure}
    \vspace{1pt}

    \begin{subfigure}[t]{0.24\textwidth}
        \centering
        \includegraphics[width=\textwidth,keepaspectratio]{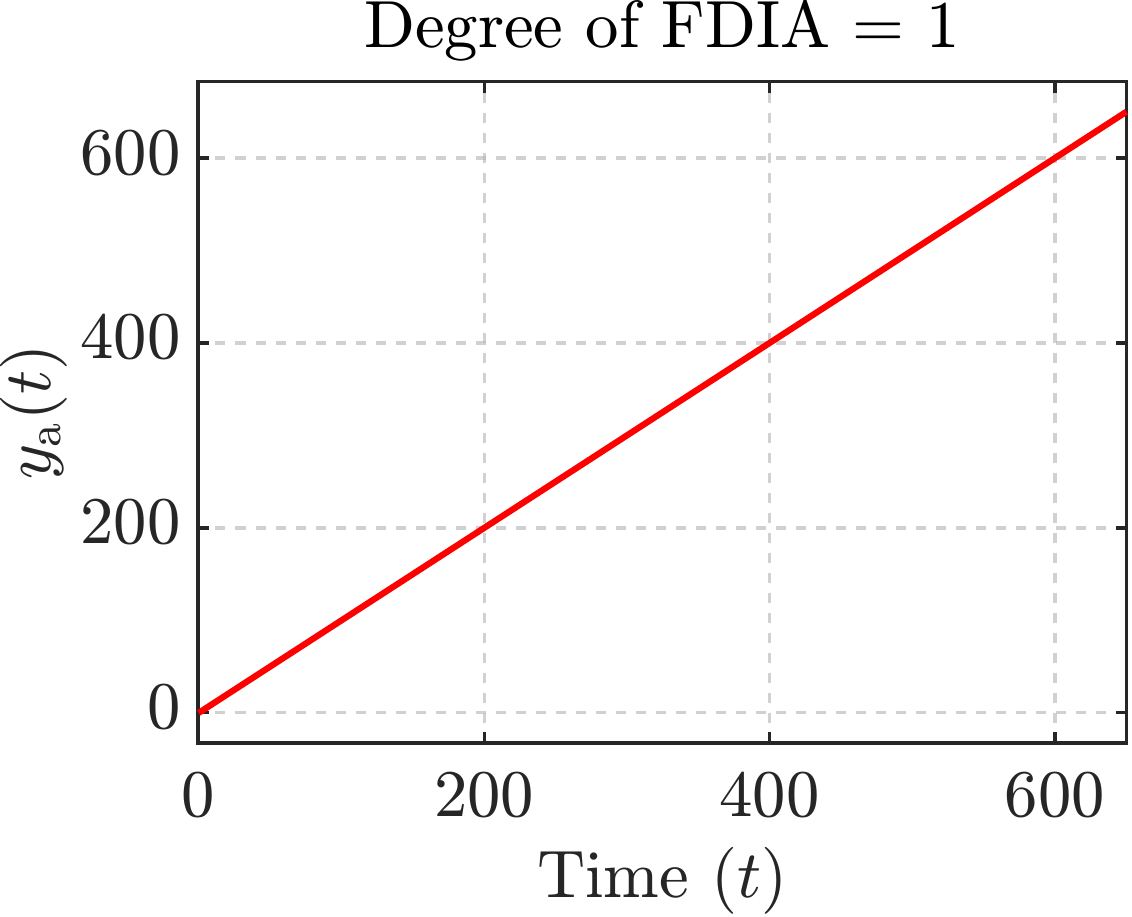}
        \caption{}
        \label{fig: FDIA ramp}
    \end{subfigure}
    \hfill%
    \begin{subfigure}[t]{0.24\textwidth}
        \centering
        \includegraphics[width=\textwidth,keepaspectratio]{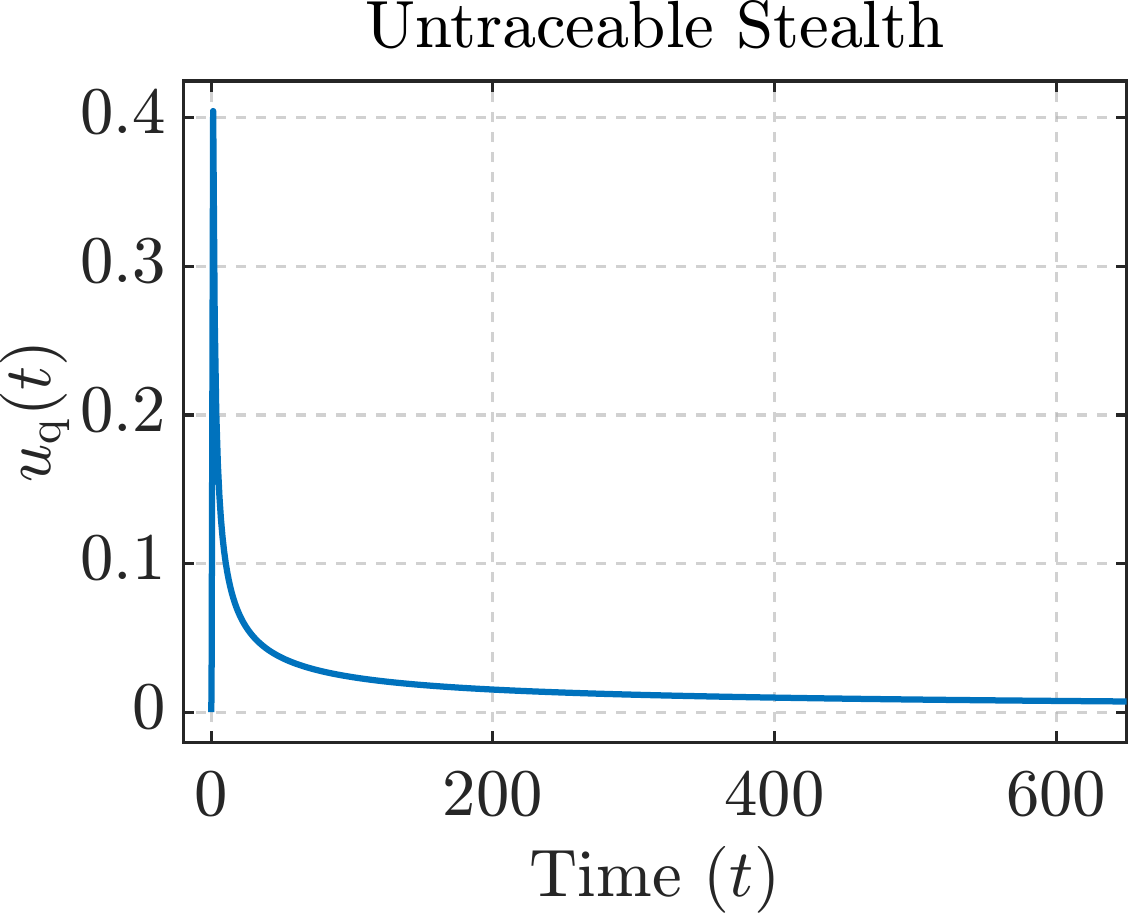}
        \caption{}
        \label{fig: uq a 1}
    \end{subfigure}
    \caption{Simulation results for Example~1 from Section~\ref{sec: num ex}.} 
    \label{fig: Ex1}
\end{figure}



The theoretical results and simulations presented so far corroborate the claim that integrators in the control-loop make \emph{non-negative}-kernel feedback systems susceptible to stealthy FDIAs. However, this does not imply that integrator-endowed systems under \emph{non-positive}-kernel feedback are exempt from suffering the same fate. To highlight this, next, we consider non-positive-kernel feedback in the context of Problem~\ref{prob: prob}.

\subsection*{Example~2 (Non-positive-kernel feedback):}
For this example, we again consider the control system adopted in Example~1, but with the following variation: The controller~$\mathcal{C}$ is now given by the following LTV system
\begin{equation}
    \begin{split}
        \dot{x}_{\s{c}} &= -(3t^2 + 0.5)x_{\s{c}} + u_{\s{c}}, \\
        u_{\s{q}} &= -x_c, 
    \end{split}
\end{equation}
which permits the non-positive impulse-response kernel 
\begin{equation}
    g_{\s{c}}(t,\tau) = -e^{-(t^3 -\tau^3) - \frac{1}{2}(t - \tau)}, \quad 0 \leq \tau < t.
\end{equation}
The response of this setup to the FDIA~$y_{\s{a}}$ constructed per \eqref{eq: FDIA LTV} for $a = 1$ and $h = 0.1$ is shown in Fig.~\ref{fig: Ex2}. 

As could be observed, this FDIA --- growing unbounded over time (see Fig.~\ref{fig: FDIA parabola}) --- causes the output of the system to grow unbounded over time, as shown in Fig.~\ref{fig: Ex2 yp}. However, the signals~$u_{\s{q}}$ and $u_{\s{c}}$ paint a different picture: The signal $u_{\s{q}}$ remains bounded for all $t$, thereby rendering the FDIA $\epsilon \text{-stealthy}$ with respect to $u_{\s{q}}$, where $\epsilon < 3$, as depicted in Fig.~\ref{fig: Ex2 uq a 1}. Similarly, the signal $u_{\s{c}}$ remains bounded for all $t$ and, additionally, converges to $0$ asymptotically. This then renders the FDIA u.s. with respect to $u_{\s{c}}$, as could be seen in Fig.~\ref{fig: Ex2 uc a 1}. 


\begin{figure}[t]
    \centering
    \begin{subfigure}[t]{0.24\textwidth}
        \centering
        \includegraphics[width=\textwidth,keepaspectratio]{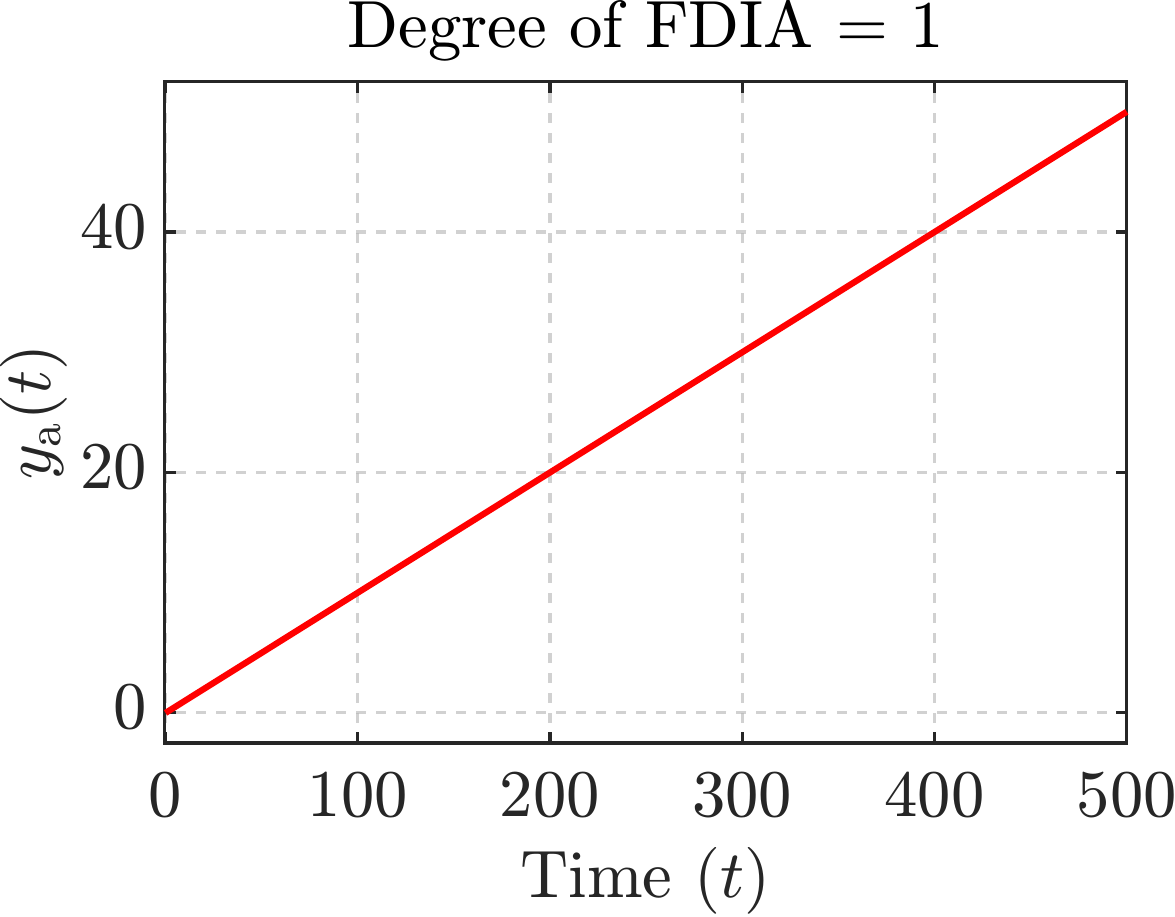}
        \caption{}
        \label{fig: FDIA Ex2}
    \end{subfigure}%
    \hfill%
    \begin{subfigure}[t]{0.24\textwidth}
        \centering
        \includegraphics[width=\textwidth,keepaspectratio]{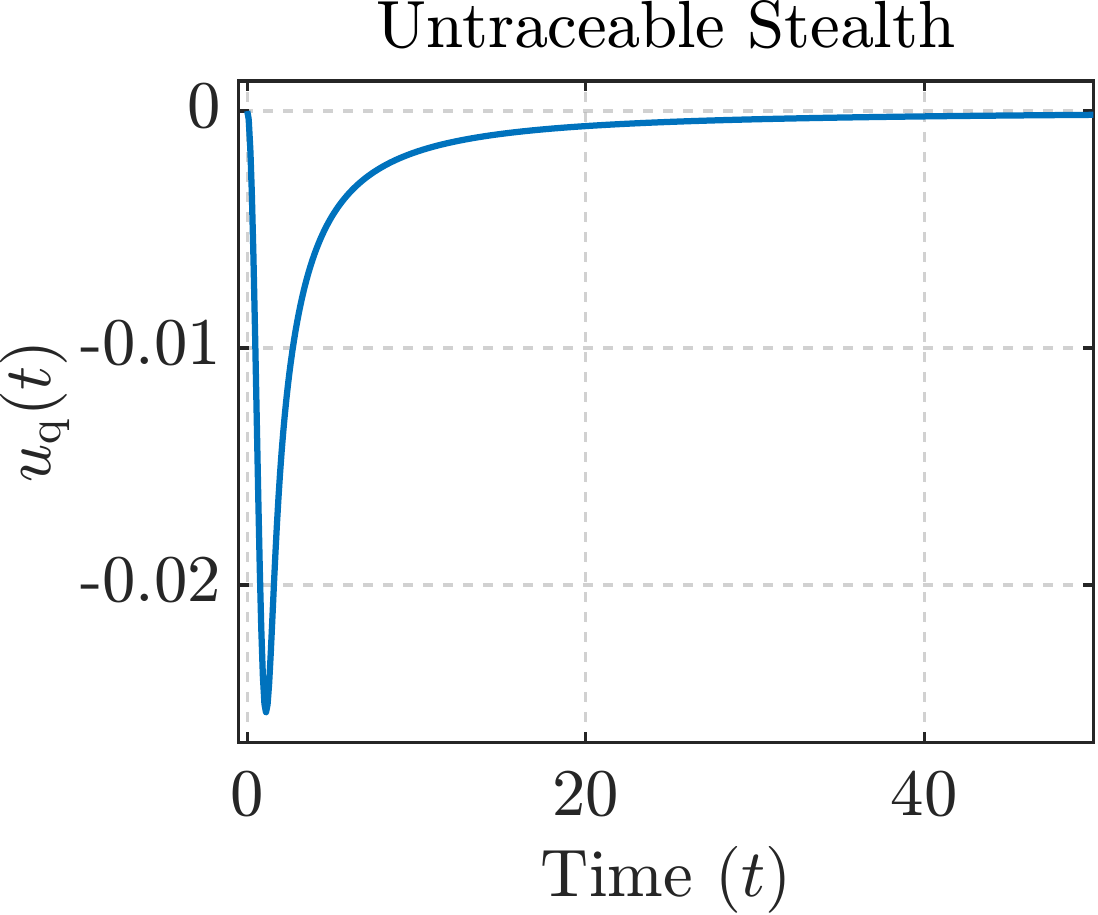}
        \caption{}
        \label{fig: Ex2 uq a 1}
    \end{subfigure}
    \vspace{1pt}

    \begin{subfigure}[t]{0.24\textwidth}
        \centering
        \includegraphics[width=\textwidth,keepaspectratio]{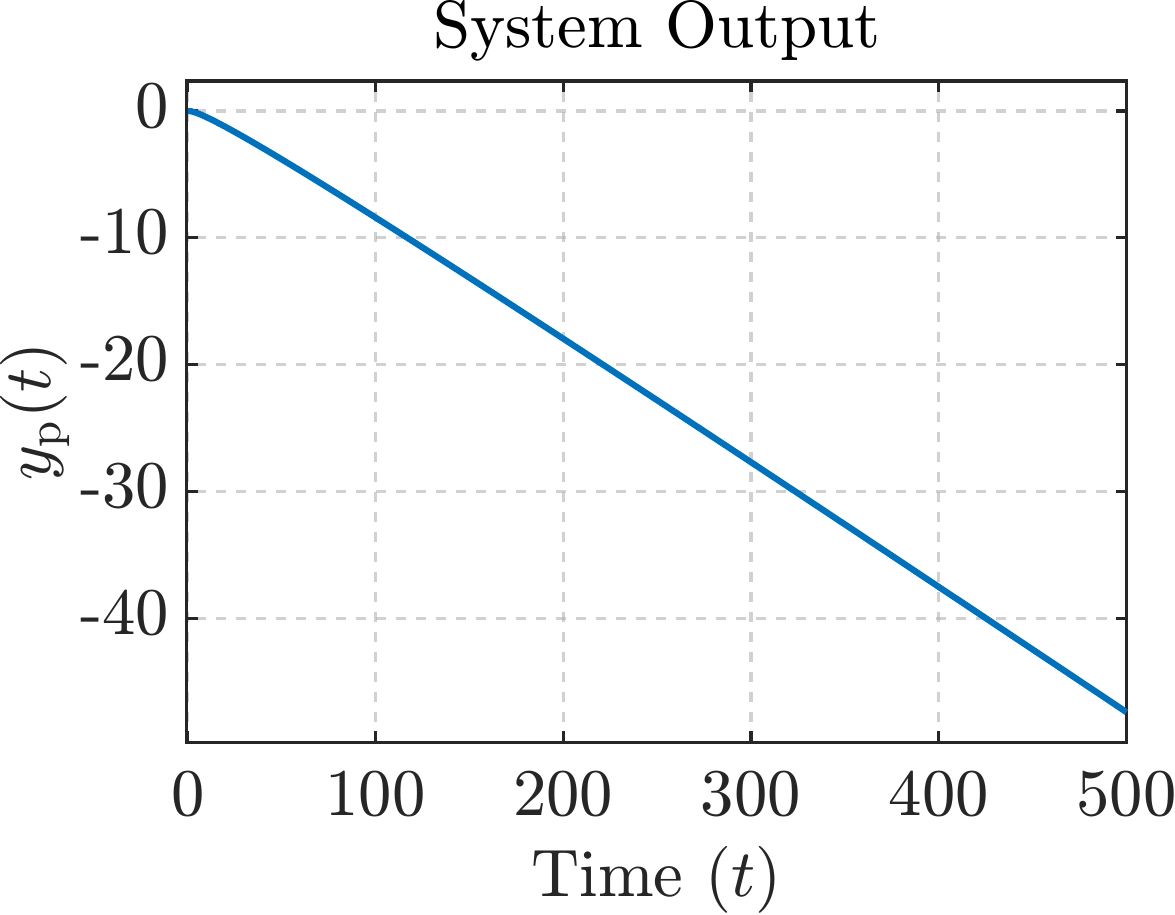}
        \caption{}
        \label{fig: Ex2 yp}
    \end{subfigure}
    \hfill%
    \begin{subfigure}[t]{0.24\textwidth}
        \centering
        \includegraphics[width=\textwidth,keepaspectratio]{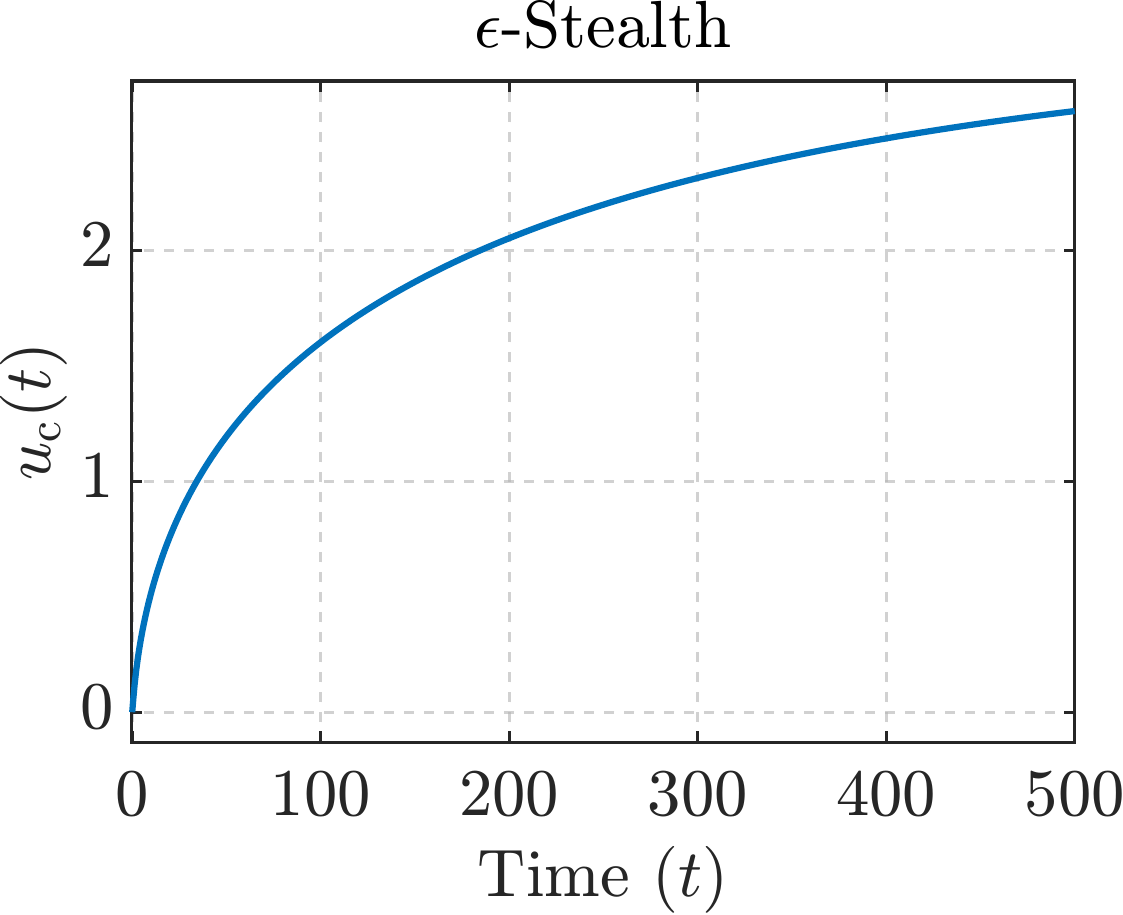}
        \caption{}
        \label{fig: Ex2 uc a 1}
    \end{subfigure}
    \caption{Simulation results for Example~2 from Section~\ref{sec: num ex}.} 
    \label{fig: Ex2}
\end{figure}

\begin{rem}
Example~2 demonstrates the potential of extending the theoretical results stated in Theorems~\ref{thm: bounded attack} and~\ref{thm: u.s.} to more general settings that permit non-positive-kernel feedback. This further strengthens the argument against the presence of integrators in feedback systems, as they allow for potentially unbounded FDIAs of the form \eqref{eq: FDIA LTV} to remain stealthy.
\end{rem}

\section{Conclusion} \label{sec: concl}
In this work, we studied the security of LTV systems that contain $q \in \mathbb{Z}_{\geq 1}$ integrators in the closed-loop under non-negative-kernel feedback. For polynomial FDIA signals of degree $a \in \mathbb{Z}_{\geq 0}$, it was formally proved under the conditions stated in Assumptions~\ref{assm: uq < infty LVIE} and~\ref{assm: uq AS LVIE} that if $a \leq q$ ($a< q$, resp.), then, for some finite $\epsilon > 0$, the FDIA will remain $\epsilon \text{-stealthy}$ (untraceably stealthy, resp.) to an anomaly detector comparing the output of the controller to its expected nominal trajectory.

For future work, we intend to expand 
our investigation to include integrator-endowed systems under non-positive-kernel feedback. It was already shown in Example~2 of Section~\ref{sec: num ex} that such systems are susceptible to stealthy FDIAs that can grow unbounded over time. Therefore, this line of inquiry seems promising and, thus, warrants further research.

\bibliography{references}                                                  







\appendix
\section{Proof of Lemma~\ref{lem: gc tauq < infty}} \label{sec: appndx lemma gc tauq < infty}    
\begin{proof}
\begin{enumerate}[label=\alph*)]
    \item Under Assumption~\ref{assm: uq < infty LVIE}-(a), 
    it follows from Lemma~\ref{lem: iff stable ZB} that
\begin{align*}
    &\sup_{t \geq 0} \, \int_{0}^t G_{\s{c},\s{p},\s{q}} (t,\sigma) \dd \sigma \nonumber \\ 
    &= \sup _{t \geq 0} \, \int_0^t \, \int_{\sigma}^t G_{\s{c},\s{p}}(t, \tau)\, g_{\s{q}}(\tau -\sigma) \, \dd \tau  \, \dd \sigma  < \infty.    
\end{align*}
Then, by applying Assumption~\ref{assm: uq < infty LVIE}-(c), we obtain the following implication from the inequality above:  
\begin{align} \label{eq: Gcp >= gc}
\sup _{t \geq 0} \, \int_0^t \, \int_{\sigma}^t g_{\s{c}}(t, \tau)\, g_{\s{q}}(\tau -\sigma) \, \dd \tau  \, \dd \sigma  &< \infty. 
\end{align}
Substituting \eqref{eq: gq} in \eqref{eq: Gcp >= gc} and iterating the order of integration then yields
\begin{align*}    
\frac{1}{(q-1)!}\, \sup_{t \geq 0}  \, \int_0^t g_\s{c}(t, \tau) \int_{0}^{\tau} \,(\tau-\sigma)^{q-1} \dd \sigma \, \dd \tau < \infty \\
\iff \frac{1}{q!}\, \sup_{t \geq 0}  \, \int_0^t g_\s{c}(t, \tau) \, \tau^q \, \dd \tau < \infty,
\end{align*}
which implies \eqref{eq: gctauq infty}, and completes the proof. 

\item 
For $a = q$, \eqref{eq: gctaua bounded} follows trivially from \eqref{eq: gctauq infty}. For $0 \leq a < q\geq 1$, consider the following: For $b_1, b_2 \in \mathbb{R}_{\geq 0}$, $ b_2 \geq b_1$, let
\begin{equation} \label{eq: Gamma proof}
    \Gamma_{t,q}(b_1,b_2) \coloneqq \int_{b_1}^{b_2} g_c(t, \tau) \, \tau^q \, \dd \tau, \quad t \geq 0.
\end{equation}
Then, from \eqref{eq: gctauq infty}, it follows that
\begin{align}
\sup_{t \in [0,1] }\Gamma_{t,q}(0,t) < \infty&, \label{eq: gammaq sup t01 0t} 
\end{align}
and 
\begin{align}
    \sup_{t > 1} \Gamma_{t,q}(0,t) = \sup_{t > 1} \bigl(\Gamma_{t,q}(0,1) + \Gamma_{t,q}(1,t) \bigr)   < \infty. \label{eq: gammaq sup t>1}
\end{align}
Under Assumption~\ref{assm: uq < infty LVIE}-(c), since $\Gamma_{t,q}(0,1)$ and $\Gamma_{t,q}(1,t) \geq 0$, for all $t \geq 0$, \eqref{eq: gammaq sup t>1} implies
\begin{align} \label{eq: gammaq sup t>1 split}
    \sup_{t>1} \Gamma _{t,q}(0,1) < \infty \text{ and } \sup_{t>1} \Gamma _{t,q}(1,t) < \infty.
\end{align}
Coming back to the proof of \eqref{eq: gctaua bounded} for $a< q$: Since
    \begin{align} \label{eq: gammatq sup t0 0t}
    \sup_{t \geq 0} \Gamma_{t,a}(0,t) &= \max\biggl\{\sup_{t \in [0,1] }\Gamma_{t,a}(0,t),\, \sup_{t> 1} \Gamma_{t,a}(0,t)\biggr\},
    \end{align}
    it suffices to show that both $\sup_{t \in [0,1] }\Gamma_{t,a}(0,t) < \infty$ and $\sup_{t> 1} \Gamma_{t,a}(0,t) < \infty$. 
    To that end, first, we note that $\sup_{t> 1} \Gamma_{t,a}(0,t) = \sup_{t> 1} \bigl(\Gamma_{t,a}(0,1) + \Gamma_{t,a}(1,t) \bigr)$. 
    For $0 \leq a < q\geq 1$, $\tau^a < \tau^q$, for all~$\tau \in \mathbb{R}: \tau > 1$. Consequently, we obtain  
    \begin{equation} \label{eq: gamma t>1 1t}
        \sup_{t>1} \Gamma _{t,a}(1,t) < \sup_{t>1}\Gamma _{t,q}(1,t) < \infty,
    \end{equation}
    where the last inequality follows from \eqref{eq: gammaq sup t>1 split}.
    Now, for~$\tau \in (0,1]$, $0< \tau^q \leq \tau^a \leq 1$, where $0 \leq a < q\geq 1$. This implies that $\tau^a < 1 + \tau^q$, for all $\tau \in [0,1]$. Consequently,
    \begin{align} \label{eq: gammasup ta01}
       \sup_{t> 1}& \Gamma_{t,a}(0,1) <  \sup_{t> 1}\biggl( \int_0^1 g_\s{c}(t,\tau)\, \dd \tau + \Gamma_{t,q}(0,1) \biggr), \nonumber \\
       &\leq \sup_{t> 1}\int_0^1 g_\s{c}(t,\tau)\, \dd \tau + \sup_{t> 1}\Gamma_{t,q}(0,1) < \infty,
    \end{align}
    where the last inequality follows from \mbox{Assumption~\ref{assm: uq < infty LVIE}-(b)} and \eqref{eq: gammaq sup t>1 split}. Together with \eqref{eq: gamma t>1 1t}, this then implies that $\sup_{t>1}\Gamma_{t,a}(0,t) < \infty$ in \eqref{eq: gammatq sup t0 0t}. Following a similar line of reasoning, it could be shown that 
    \begin{align*}
        \sup_{t\in [0,1]}\Gamma _{t,a}(0,t) &< \sup_{t\in [0,1]}\int_0^t g_\s{c}(t,\tau)\, \dd \tau + \sup_{t\in [0,1]}\Gamma_{t,q}(0,t), \\
        &< \infty,
    \end{align*}
    where the last inequality follows from \mbox{Assumption~\ref{assm: uq < infty LVIE}-(b)} and \eqref{eq: gammaq sup t01 0t}. With this, we have shown that for $a \leq q$ $\sup_{t \geq 0}, \Gamma_{t,a}(0,t) < \infty$ in \eqref{eq: gammatq sup t0 0t}, which is equivalent to \eqref{eq: gctaua bounded}. This completes the proof. 
\end{enumerate}
The proof is complete. 
\end{proof}

\section{Proof of Lemma~\ref{lemm: AS}} \label{sec: appndx lemma AS}    
                                                                        
Before proceeding with the proof of Lemma~\ref{lemm: AS}, we present the following supporting result. 
\begin{lem} \label{lem: uniform conv}
Under Assumptions~\ref{assm: uq < infty LVIE} and~\ref{assm: uq AS LVIE}, the following limit converges uniformly in $t$
\begin{equation} \label{eq: uniform convergence unproved}
    \lim_{T\to 0^+} \int_T^{t} g_{\s{c}}(t,\tau) (\tau - T)^{q-1} \dd \tau = \int_0^{t} g_{\s{c}}(t,\tau) \tau ^{q-1} \dd \tau, 
\end{equation}
where $q \in \mathbb{Z}_{\geq 1}$.
\end{lem}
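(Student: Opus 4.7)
The plan is to write the difference
\[
\Delta_T(t) \;:=\; \int_0^{t} g_{\s{c}}(t,\tau)\tau^{q-1}\dd\tau \;-\; \int_T^{t} g_{\s{c}}(t,\tau)(\tau-T)^{q-1}\dd\tau
\]
as a sum of two nonnegative pieces,
\[
\Delta_T(t) \;=\; \underbrace{\int_0^T g_{\s{c}}(t,\tau)\tau^{q-1}\dd\tau}_{\text{near-zero part}}
\;+\;
\underbrace{\int_T^{t} g_{\s{c}}(t,\tau)\bigl[\tau^{q-1}-(\tau-T)^{q-1}\bigr]\dd\tau}_{\text{shift part}},
\]
and show that each piece is dominated by a bound that vanishes as $T\to 0^+$ uniformly in $t\geq 0$. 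The case $q=1$ is handled separately, since the shift part is then identically zero and the near-zero part is exactly $\int_0^T g_{\s{c}}(t,\tau)\dd\tau$, which tends to zero uniformly in $t$ by Assumption~\ref{assm: uq AS LVIE}-(c).

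For the near-zero part when $q\geq 2$, I would bound $\tau^{q-1}\leq T^{q-1}$ on $[0,T]$ and then, for $T\leq 1$, use $\int_0^T g_{\s{c}}(t,\tau)\dd\tau \leq \int_0^1 g_{\s{c}}(t,\tau)\dd\tau$, which is uniformly bounded in $t$ by Assumption~\ref{assm: uq < infty LVIE}-(b). The resulting estimate is $O(T^{q-1})$ uniformly in $t$, which clearly vanishes as $T\to 0^+$.

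For the shift part when $q\geq 2$, the key ingredient is the mean value theorem applied to $s\mapsto s^{q-1}$ on $[\tau-T,\tau]$, which yields the clean pointwise bound
\[
0 \;\leq\; \tau^{q-1}-(\tau-T)^{q-1} \;\leq\; (q-1)\,T\,\tau^{q-2}, \qquad \tau\geq T.
\]
Substituting this in gives
\[
0\;\leq\;\int_T^t g_{\s{c}}(t,\tau)\bigl[\tau^{q-1}-(\tau-T)^{q-1}\bigr]\dd\tau \;\leq\; (q-1)T\int_0^t g_{\s{c}}(t,\tau)\tau^{q-2}\dd\tau.
\]
Since $q-2 < q$, Lemma~\ref{lem: gc tauq < infty}-(b) (applied with $a=q-2$) ensures the integral on the right is uniformly bounded in $t\geq 0$ by some finite constant. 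Hence the shift part is $O(T)$ uniformly in $t$, and also vanishes as $T\to 0^+$.

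Combining both estimates, $\sup_{t\geq 0}|\Delta_T(t)|\to 0$ as $T\to 0^+$, which is precisely the uniform convergence asserted in \eqref{eq: uniform convergence unproved}. The main obstacle is obtaining the uniform-in-$t$ control of the shift part; this is exactly where the prior work done in Lemma~\ref{lem: gc tauq < infty}-(b) pays off, because without that uniform bound on $\int_0^t g_{\s{c}}(t,\tau)\tau^{q-2}\dd\tau$ the factor $T$ alone would not suffice. Non-negativity of $g_{\s{c}}$, which underpins both the splitting of $\Delta_T(t)$ into nonnegative pieces and the monotone bounds above, is guaranteed by Assumption~\ref{assm: uq < infty LVIE}-(c).
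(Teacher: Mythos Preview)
Your proposal is correct and follows essentially the same approach as the paper: the same two-piece decomposition of the difference, the same separate treatment of $q=1$ via Assumption~\ref{assm: uq AS LVIE}-(c), the same mean-value-theorem bound $\tau^{q-1}-(\tau-T)^{q-1}\le (q-1)T\tau^{q-2}$ for the shift part, and the same appeal to Lemma~\ref{lem: gc tauq < infty} for the uniform-in-$t$ bound on $\int_0^t g_{\s{c}}(t,\tau)\tau^{q-2}\dd\tau$. The only cosmetic difference is in the near-zero part for $q\ge 2$: the paper uses $\tau^{q-1}\le T\tau^{q-2}$ on $[0,T]$ to merge both pieces into a single $TqM$ bound, whereas you use $\tau^{q-1}\le T^{q-1}$ together with Assumption~\ref{assm: uq < infty LVIE}-(b) to get $O(T^{q-1})$ --- both are valid and lead to the same conclusion.
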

\begin{proof}
To complete the proof, it suffices to show that 
\begin{equation} \label{eq: uniform conv}
    \lim_{T \to 0^+ } \sup_{t \geq 0} |F_T(t) - F_0(t) | = 0,
\end{equation}
where $F_T(t) \coloneqq \int_T^t g_{\s{c}}(t,\tau) (\tau - T)^{q -1} \dd \tau$, for $t > T > 0$. 
To that end, it follows that for $q = 1$, (under \mbox{Assumption~\ref{assm: uq < infty LVIE}-(c)}) $|F_{T}(t) - F_0(t)| = \int_0^T g_c(t,\tau) \dd\tau$. Then, 
\eqref{eq: uniform conv} follows by applying \mbox{Assumption~\ref{assm: uq AS LVIE}-(c)}. 
For $q \in \mathbb{Z}_{\geq 2}$, we note that
\begin{align} \label{eq: FT limit q2}
|&F_{T}(t) - F_0(T) | =\nonumber \\
& \Biggl|\int_T^t g_{\s{c}}(t,\tau)\bigl((\tau - T)^{q-1} - \tau^{q-1}\bigr) \dd \tau - \int_0^Tg_{\s{c}}(t,\tau) \tau^{q-1} \dd \tau\Biggr|, \nonumber \\
&\leq  \int_T^t g_{\s{c}}(t,\tau)  \bigl| \tau^{q-1} -(\tau - T)^{q-1}\bigr| \dd \tau + \int_0^Tg_{\s{c}}(t,\tau) \tau^{q-1} \dd \tau.
\end{align}
For the first term in \eqref{eq: FT limit q2}, by applying the mean value theorem to the continuous function $p(r) \coloneqq  r^{q-1}$ defined over the closed domain $[\tau - T, \tau]$, we obtain 
\begin{align*} 
|\tau^{q-1} -(\tau - T)^{q-1}| &\leq T(q-1) \max_{r \in ((\tau -T), \tau)}r^{q-2}, \\
&\leq T(q-1)\tau^{q-2}.    
\end{align*}
For the second term in \eqref{eq: FT limit q2}, for $q \geq 2$, $\tau^{q-1} = \tau \tau^{q-2} \leq T\tau^{q-2}$, for $\tau \in [0,T]$. Substituting these upper bounds (in place of their corresponding terms) in \eqref{eq: FT limit q2}, and then using the resulting upper bound of \eqref{eq: FT limit q2} in \eqref{eq: uniform conv} under \mbox{Assumption~\ref{assm: uq < infty LVIE}-(c)} (which allows us to increase the interval of the integrals while maintaining the upper bound) then yields
\begin{align*}
\lim_{T \to 0^+ } \sup_{t \geq 0} |F_T(t) - F_0(t) | &\leq \lim_{T \to 0^+ } Tq \, \underbrace{\sup_{t \geq 0} \int_0^tg_{\s{c}}(t,\tau) \tau^{q-2}\dd\tau}_{\eqqcolon M < \infty \text{ (per Lemma~\ref{lem: gc tauq < infty})}}, \nonumber \\
&=\lim_{T \to 0^+} TqM = 0.
\end{align*}
This satisfies \eqref{eq: uniform conv} and completes the proof. 
\end{proof}
Now we proceed with the proof of Lemma~\ref{lemm: AS}.
\begin{proof}[Proof of Lemma~\ref{lemm: AS}]
\begin{enumerate}[label=\alph*)]
    \item Firstly, it follows from Lemma~\ref{lem: iff AS ZB} that, for any fixed $T>0$,
\begin{equation} \label{eq: lim Gcpq = 0}
    \lim_{t \to \infty} \int _0^T G_{\s{c}, \s{p}, \s{q}}(t,\sigma) \dd \sigma = 0.
\end{equation}
Then, it follows from the definition of $G_{\s{c},\s{p}, \s{q}}$ in \eqref{eq: uq volt} and \mbox{Assumption~\ref{assm: uq < infty LVIE}-(c)} that
\begin{align}
    &\lim_{t \to \infty} \int_{0}^T G_{\s{c}, \s{p}, \s{q}}(t,\sigma) \dd \sigma, \nonumber \\
    &\geq \lim_{t\to \infty} \int_0^T \biggl( \int_{\sigma}^t g_\s{c}(t,\tau)\, g_{\s{q}}(\tau-\sigma) \, d\tau \biggr) \, \dd \sigma, \nonumber \\
    &=  \frac{1}{(q-1)!} \lim_{t\to \infty} \Biggl(\int_0^T g_\s{c}(t, \tau)\biggl( \int_0^\tau  (\tau-\sigma)^{q-1} \, \dd \sigma \biggr) d\tau \nonumber \\ 
    &\phantom{\quad\quad}
    + \int_T^t g_\s{c}(t,\tau)\biggl( \int_0^T (\tau-\sigma)^{q-1} \dd \sigma \biggr) \dd \tau \Biggr),  \label{eq: odd iteration itegration} \\
    &=  \frac{1}{q!} \lim_{t\to \infty} \Biggl( \int_0^T g_\s{c}(t, \tau) \tau^q \, \dd \tau \nonumber \\
    & \phantom{\frac{1}{q!} \quad\quad} 
    + \int_T^t g_\s{c}(t, \tau) \bigl( \tau^q -(\tau - T)^q \bigr) \, \dd \tau \Biggr)   \geq  0,  \label{eq: the expression}
\end{align}
where the equality in \eqref{eq: odd iteration itegration} follows from the iteration of the order of integration (allowed under the continuity of the integrand, where continuity of the integrand follows from Lemma~\ref{lem: g cont}). Furthermore, the last inequality ($\geq 0$) in \eqref{eq: the expression} follows from the observation that each of the terms inside the main parentheses in \eqref{eq: the expression} is non-negative per Assumption~\ref{assm: uq < infty LVIE}-(c). Finally, we note that this last inequality implies equality ($= \!0$), as the expression in \eqref{eq: the expression} cannot be greater than, and less than, $0$ simultaneously. 
Substituting~$\mu \coloneqq \tau - T$ in \eqref{eq: the expression}, and subsequently applying the binomial expansion 
\begin{align} \label{eq: binom}
    (\mu+T)^q - \mu^q = qT\mu^{q-1} + \sum_{k=2}^q \binom{q}{k} \mu^{q-k} T^k,
\end{align}
where $\binom{q}{k}\coloneqq 0$ for $ k > q \geq 1$, yields

\begin{align} \label{eq: after mu and bin}
    &  \lim_{t\to \infty} \Biggl( \int_0^T g_\s{c}(t, \tau) \tau^q \, \dd \tau + qT\int_0^{t-T} \!\!\!\! g_{\s{c}}(t,\mu + T) \mu^{q-1} \dd \mu \nonumber \\
    &+ \sum_{k=2}^q \binom{q}{k} T^k \int_0^{t-T} g_\s{c}(t, \mu + T) \mu^{q-k} \dd \mu \Biggr) =0.
\end{align}
Since each of the terms inside the parentheses is non-negative (under Assumption~\ref{assm: uq < infty LVIE}-(c)), 
and the limit of the sum exists (as given in \eqref{eq: after mu and bin}), the sum of the limits (the limit in \eqref{eq: after mu and bin} when distributed inside the parentheses and applied to each term) exists, and each of them (the limits) also converges to $0$. Keeping this in mind, we focus on the second term inside the parentheses in \eqref{eq: after mu and bin}, re-substitute $\mu = \tau - T$, divide it by $T>0$, and take the limit~$T \to 0^+$ to obtain 
\begin{equation} \label{eq: limit before iteration}
    \lim_{T\to 0^+}\lim_{t\to \infty} \int_T^{t} g_{\s{c}}(t,\tau) (\tau - T)^{q-1} \dd \tau = 0.
\end{equation}
To complete the proof, it suffices to show that the following limit 
\begin{equation} \label{eq: uniform convergence proved}
    \lim_{T\to 0^+} \int_T^{t} g_{\s{c}}(t,\tau) (\tau - T)^{q-1} \dd \tau = \int_0^{t} g_{\s{c}}(t,\tau) \tau ^{q-1} \dd \tau, 
\end{equation}
converges uniformly in $t$, as it would then allow us to employ \cite[Theorem~1]{kadelburg2005interchanging} to iterate the order of the limit in \eqref{eq: limit before iteration} and apply \eqref{eq: uniform convergence proved} to obtain \eqref{eq: gctauq-1 AS}, thereby completing the proof. 
Lemma~\ref{lem: uniform conv} establishes \eqref{eq: uniform convergence proved}, which completes the proof. 
\item From \eqref{eq: phi}, we note that if $a = q-1$, then \eqref{eq: gctaua AS phi} follows trivially. For the case where $q > 1$ and $0 \leq a < q-1$, 
it follows from \eqref{eq: gctauq-1 AS} that, for $t \geq 1$, 
\begin{align} 
&\lim_{t \to \infty} \int_0^t g_c(t,\tau) \tau^a \dd\tau, \nonumber \\
&= \lim_{t \to \infty} \biggl(\int_0^1 g_\s{c}(t,\tau) \tau^a \dd \tau + \int_1^t g_\s{c} (t,\tau) \tau^a \dd \tau \biggr), \label{eq: gctaua limit = 0 eqq} \\
&\leq \lim_{t \to \infty} \int _0^1 g_\s{c}(t, \tau) \dd \tau + \lim_{t \to \infty} \int_0^t g_\s{c}(t, \tau) \tau^{q-1} \dd\tau = 0, \label{eq: gctaua limit = 0} 
\end{align}
where the last inequality in \eqref{eq: gctaua limit = 0} is explained by observing the following: i) each term in the limit of sums in \eqref{eq: gctaua limit = 0 eqq} is non-negative under Assumption~\ref{assm: uq < infty LVIE}-(c), therefore, the sum of limits exists and, thus, the limit can be distributed inside the parentheses. ii) Given that $0 \leq a < q-1 \geq 1$, we have $\tau^a < \tau^{q-1}$, for all $\tau  > 1$. Furthermore, for $\tau \in (0,1]$, $0< \tau^{q-1} \leq \tau^a \leq 1$. 
This implies that $\tau^a < 1 + \tau^{q-1}$, for all $\tau \in [0,1]$. Substituting these upper bounds of $\tau^a$ in \eqref{eq: gctaua limit = 0 eqq} (over appropriate intervals of integration), and observing that $\int_0^1g_c(t,\tau)\tau^{q-1} \dd\tau + \int_1^tg_c(t,\tau)\tau^{q-1} \dd\tau  = \int_0^tg_c(t,\tau)\tau^{q-1} \dd\tau$, then yields the inequality in \eqref{eq: gctaua limit = 0}. Finally, the equality in \eqref{eq: gctaua limit = 0} follows from Assumption~\ref{assm: uq AS LVIE}-(b) and \eqref{eq: gctauq-1 AS}. That is, \eqref{eq: gctaua limit = 0} implies $ \lim_{t \to \infty} \int_0^t g_c(t,\tau)\tau^a \dd\tau \leq 0$. Under Assumption~\ref{assm: uq < infty LVIE}-(c), the inequality here then implies equality ($=\!0$), which yields \eqref{eq: gctaua AS phi} and completes the proof. 
\end{enumerate}
The proof is complete.
\end{proof} 
\end{document}